\newcommand{\abs}[1]{\left\lvert #1 \right\rvert}\def\wt{\widetilde}
\DeclareMathOperator{\sgn}{sgn}
\def\wh{\widehat}
\def\ds{\displaystyle}
\def\res{\mathop{\mathrm{res}}\limits_}
\def\I{{\cal{I}}}
\def\tr{\mathrm {Tr}}
\def\le{\left}
\def\ri{\right}
\def\YY{\mathbb Y}
\def\bc{\begin{corollary}}
\def\ec{\end{corollary}}
\def\&{&{\hskip -20pt}}
\def \p{\mathbf p}
\def\q{\mathbf q}
\def\u{\mathbf u}
\def\br{\begin{remark}\rm\small}
\def\1{{\bf 1}}
\def\er{\end{remark}}
\def\bt{\begin{theorem}}
\def\et{\end{theorem}}
\def\bx{\begin{examp}}
\def\ex{\end{examp}}
\def\bd{\begin{definition}}
\def\ed{\end{definition}}
\def\bp{\begin{proposition}\rm}
\def\bl{\begin{lemma}\em}
\def\el{\end{lemma}}
\def\ep{\end{proposition}}
\def\bea{\begin{eqnarray}}
\def\eea{\end{eqnarray}}
\def\C{{\mathbb C}}
\def\R{{\mathbb R}}
\def\N{{\mathbb N}}
\def\Z{{\mathbb Z}}
\newtheorem{theorem}{Theorem}[section]
\newtheorem{examp}{Example}[section]
\newtheorem{coroll}{Corollary}[section]
\newtheorem{examps}{Examples}[section]
\newtheorem{lemma}{Lemma}[section]
\newtheorem{remark}{Remark}[section]
\newtheorem{remarks}[remark]{Remarks}
\newtheorem{proposition}{Proposition}[section] 
\newtheorem{definition}{Definition}[section]
\def\br{\begin{remark}}
\def\er{\end{remark}}
\def\bt{\begin{theorem}}
\def\et{\end{theorem}}
\def\bc{\begin{coroll}}
\def\ec{\end{coroll}}
\def\brs{\begin{remarks} \rm\
\begin{enumerate}}
\def\ers{\end{enumerate}\end{remarks}}
\def\bl{\begin{lemma}}
\def\el{\end{lemma}}
\def\bxs{\begin{examps}. \rm\begin{enumerate}}
\def\exs{\end{enumerate}\end{examps}}
\def\bd{\begin{definition}}
\def\ed{\end{definition}}
\def\bp{\begin{proposition}}
\def\ep{\end{proposition}}
\def\be{\begin{equation}}
\def\ee{\end{equation}}
\def\bew{\begin{equation*}}
\def\eew{\end{equation*}}
\def\X {{\bf X}}
\def\Y{{\bf Y}}
\def\d{{\rm d}}
\def\bea{\begin{eqnarray}}
\def\eea{\end{eqnarray}}
\def\beas{\begin{eqnarray*}}
\def\eeas{\end{eqnarray*}}
\def\iint{\int\!\!\!\!\int}
\def\C{{\mathbb C}}
\def\R{{\mathbb R}}
\def\N{{\mathbb N}}
\def\Z{{\mathbb Z}}
\date{}
\begin{document}
%

\baselineskip 16pt plus 1pt minus 1pt
\begin{flushright}
\end{flushright}
\vspace{0.2cm}
\begin{center}
\begin{Large}
\textbf{ Cauchy Biorthogonal Polynomials}
\end{Large}\\
\bigskip
\begin{large} {M.
Bertola $^{\dagger\ddagger}$ \footnote{Work supported in part by the Natural
    Sciences and Engineering Research Council of Canada (NSERC),
    Grant. No. 261229-03 and by the Fonds FCAR du
    Qu\'ebec No. 88353.},  M. Gekhtman~$^a$ 
\footnote{Work supported in part by NSF Grant DMD-0400484.}, J. Szmigielski~$^b$ \footnote{Work supported in part by the Natural
    Sciences and Engineering Research Council of Canada (NSERC),
    Grant. No. 138591-04}}
\end{large}
\\
\bigskip
\begin{small}
$^{\dagger}$ {\it Centre de recherches math\'ematiques,
Universit\'e de Montr\'eal\\ C.~P.~6128, succ. centre ville, Montr\'eal,
Qu\'ebec, Canada H3C 3J7 \\
~~E-mail: bertola@crm.umontreal.ca}
\smallskip

$^{\ddagger}$ {\it Department of Mathematics and
Statistics, Concordia University\\ 1455 de Maisonneuve W., Montr\'eal, Qu\'ebec,
Canada H3G 1M8} \\
\smallskip
~$^a$ {\it Department of Mathematics
255 Hurley Hall, Notre Dame, IN 46556-4618, USA\\
~~E-mail: Michael.Gekhtman.1@nd.edu}

\smallskip
~$^b$ {\it Department of Mathematics and Statistics, University of Saskatchewan\\ 106 Wiggins Road, Saskatoon, Saskatchewan, S7N 5E6, Canada\\
~~E-mail: szmigiel@math.usask.ca}
\end{small}
\end{center}
\bigskip
\begin{abstract}
The paper investigates the properties of certain biorthogonal polynomials appearing in a specific simultaneous Hermite-Pad\'e\ approximation scheme. Associated to any totally positive kernel and a pair of positive measures on the positive axis we define biorthogonal polynomials  and prove that their zeroes are simple and positive. We then specialize the kernel to the Cauchy kernel $\frac 1{x+y}$ and show that the ensuing biorthogonal polynomials solve a four-term recurrence relation, have relevant Christoffel-Darboux generalized formul\ae\, and their zeroes are interlaced. In addition, these polynomial 
solve a combination of Hermite-Pad\'{e} approximation problems to a Nikishin system 
of order $2$.  
The motivation arises from two distant areas; on one side, in the study of the inverse spectral problem for the peakon solution of the Degasperis-Procesi equation; on the other side, from a random matrix model involving two positive definite random Hermitian matrices. Finally, we show how to characterize these polynomials in term of a Riemann--Hilbert problem.

\end{abstract}
\medskip
\bigskip

\tableofcontents

 \section{Introduction and motivations}
 This paper mainly deals with a class of biorthogonal polynomials $ \{p_n(x)\}_\N,\{ q_n(y)\}_\N$ of degree $n$ satisfying the 
 biorthogonality relations
 \be
 \int_{\R_+} \int_{\R_+} p_n(x) q_m(y) \frac {\d \alpha(x)\d\beta(y)}{x+y} = \delta_{mn}, \label{CBOPs}
 \ee
 where $\d\alpha, \d\beta$ are positive measures supported on $\R_+$ with finite bimoments. These polynomials will be introduced in Sec. \ref{SecPosKer} in a more general context of polynomials associated to general {\it totally positive kernels} (Def. \ref{posker}) with which they share some general properties in regard to their zeroes.
 
 While these properties are interesting in their own right, we wish to put the work in a more general context and explain the two main motivations behind it. They fall within two different and rather distant areas of mathematics :  peakon solutions to nonlinear PDEs and  Random Matrix theory.

\paragraph{ Peakons for the Degasperis-Procesi equation.}

In the early 1990's, Camassa and Holm \cite{ch} introduced the (CH) equation to model (weakly) dispersive shallow wave propagation.
More generally, the CH equation belongs to the so-called b-family of PDEs
\begin{equation}
  \label{eq:family}
  u_t - u_{xxt} + (b+1) u u_x = b u_x u_{xx} + u u_{xxx}, 
  \quad (x,t) \in \R^2, \quad b\in \R, 
\end{equation} 
Two cases, $b=2$ and $b=3$ within this family are now known to be integrable: the case $b=2$ is the original CH equation whereas the case $b=3$ is the Degasperis-Procesi \cite{dp} (DP) equation, which is more directly related to the present paper.

In all cases the b-family admits weak (distributional) solutions of the form: 
\begin{equation} \label{eq:peakonansatz}
  u(x,t) = \sum_{i=1}^n m_i(t) \, e^{-\abs{x-x_i(t)}},
\end{equation}
if and only if the positions $x_i(t)$ and the heights $m_i(t)$
satisfy the system of nonlinear ODEs:
\begin{equation}
  \label{eq:CH-peakonODE}
   \dot{x}_k = \sum_{i=1}^n m_i e^{-\abs{x_k-x_i}},
    \qquad 
    \dot{m}_k =(b-1) \sum_{i=1}^n m_k m_i \sgn(x_k-x_i) \, e^{-\abs{x_k-x_i}},
\end{equation}
for $k=1,\ldots,n$.
The non-smooth character of the solution manifests itself by the presence of sharp peaks 
at $\{x_k\}$, hence the name {\sl{peakons}}. 
For the CH equation the peakons solution were studied in \cite{bss-stieltjes, bss-moment}, while for the DP equation in \cite{ls-invprob, ls-cubicstring}; in both cases the solution is related to the {\bf isospectral evolution} of an associated linear boundary-value problem
\bea
\begin{array}{c|c}
b=2\  (CH) & b=3\ (DP)\\[10pt]
\hline\\
-\phi ''(\xi, z)  =z g (\xi) \phi (\xi, z) & 
-\phi'''(\xi,z)= z g(\xi) \phi(\xi,z) \\[7pt]
\phi(-1)= \phi(1)=0&
\phi(-1) = \phi'(-1)  = \phi(1)=0 
 \end{array}
\eea
The variables $x,\xi$ and the quantities $m, g, u$ are related by
\be
\xi=\tanh\le(\frac {x}{b-1}\ri)\ ,\qquad
g(\xi) = \le( \frac {1-\xi^2}{2}\ri)^{-b} m(x)\ ,\qquad
m(x,t) = u(x,t)-u_{xx}(x,t)
\ee
 Because of the similarity to the equation of an inhomogeneous classical string (after a separation of variables) we refer to the two linear ODEs as the {\it quadratic} and {\it cubic} string, respectively.
The case of peakons corresponds to the choice 
\be
 m(x,t) = 2\sum_{j=1}^n \delta(x-x_i(t)) m_i(t)\label{diraccomb}
 \ee
The remarkable fact is that in both cases the associated spectral problems have  a finite {\bf positive} spectrum; this is not so surprising in the case of the quadratic string which is a self-adjoint problem, but it is quite unexpected for the cubic string, since the problem is not self-adjoint and there is no {\it a priori} reason for the spectrum to even be real \cite{ls-cubicstring}.  

As it is natural within the Lax approach to integrable PDEs, the {\it spectral map} linearizes the evolution of the isospectral evolution: if $\{z_j\}$ are the eigenvalues of the respective {boundary value } problems and one introduces the appropriate spectral residues 
\be
b_j := \res{z=z_j}\frac{W(z)}{z} \d z\ ,\ \ \ \ \ W(z):=  \frac {\phi_\xi(1,z)}{\phi(1,z)}
\ee
then one can show \cite{ls-invprob} that the evolution linearizes as follows (with the dot representing the time evolution)
\bea
\dot z_k=0\ ,\qquad \frac{\dot b_k}{b_k} = \frac 1{z_k}
\eea
Since this is not the main focus of the paper, we are deliberately glossing over several interesting points; {the interested reader is referred to \cite{ls-cubicstring}   and our recent work \cite{Paper1mini} for further details. In short, the solution method for the DP equation  can by illustrated by the diagram}
$$\begin{CD}
\{ x_k(0), m_k(0) \}_{k=1}^n @> \phantom{\text{aa}} \text{ spectral map}\phantom{\text{aa}}>> \{z_k, b_k\}\\
@VV \text{DP flow}V@VV \text{evolution of the extended spectral data}V\\
\{ x_k(t), m_k(t) \}_{k=1}^n @<\text{inverse spectral map}< < \{\ds z_k(t)= z_k\qquad \qquad\qquad\ \ \ \atop \ds b_k(t) = b_k(0) \exp( t/z_k) \}
\end{CD}$$ 
In the inverse spectral map resides the r\^ole of the biorthogonal polynomials to be studied here, as we briefly sketch below. 
The inverse problem for the ordinary string with finitely many point masses 
is solved by the method of continued fractions of Stieltjes' type as was pointed out 
by M.G. Krein (\cite{gantmacher-krein}).  The inverse problem for the cubic string 
with finitely many masses is solved with the help of the following 
simultaneous Hermite-Pad\'e type approximation (\cite{ls-cubicstring})
\begin{definition}[Pad\'{e}-like approximation problem]
\label{def:pade}
Let $d\mu(x)$ denote the spectral measure associated with the cubic string boundary value problem and $\frac{W(z)}{z}=\int\frac{1}{z-x}d\mu(x)$, $\frac{Z}{z}=\iint \frac{x}{z-x}\frac{1}{x+y}d\mu(x)d\mu(y) $ denote the Weyl functions introduced in \cite{ls-cubicstring}.  Then, given an integer $1 \leq k \leq n$,
we seek three polynomials
$(Q,P,\widehat{P})$ of degree $k-1$
satisfying the following conditions:
\begin{enumerate}
\item[] {\bf [Approximation]}:
$\ds 
W=\frac{P}{Q}+O\left(\frac{1}{z^{k-1}}\right),
\qquad
Z=\frac{\widehat{P}}{Q}+O\left(\frac{1}{z^{k-1}}\right)
\qquad
(z\to\infty).
$
\item[]{\bf  [Symmetry]}:
$
\ds Z^* \, Q + W^* \, P + \widehat{P}
=O\left(\frac{1}{z^k}\right) \  (z\to\infty)
$ with $W^*(z)=-W(-z)$, $Z^*(z)=Z(-z)$.

\item []{\bf [Normalization]}:
$\ds 
P(0)=1,
\qquad
\widehat{P}(0)=0.
$
\end{enumerate}
\end{definition}

This approximation problem has a unique solution (\cite{ls-cubicstring}) which, in turn, 
is used to solve the inverse problem for the cubic string.  We point out that it is here in 
this approximation problem that the Cauchy kernel $\frac{1}{x+y}$ makes its, somewhat unexpected, appearance through the spectral representation of the second Weyl function.

\paragraph{Random Matrix Theory}
The other source of our interest in biorthogonal polynomials comes from random matrix theory.  It is well known \cite{MehtaBook} that the Hermitean matrix model is
intimately related to (in fact, solved by) orthogonal polynomials (OPs). Not
so much is known about the role of biorthogonal polynomials (BOPs).  
However, certain
biorthogonal polynomials somewhat similar to the ones in the 
present paper appear prominently in the analysis of ``the''
two--matrix model after reduction to the spectrum of eigenvalues \cite
{BEH_dualityCMP, BEHDuality, BEH_diffCMP, KenNick}; in that case the pairing is of the form
\be \int\int p_n(x) q_m(y) {\rm e}^{-xy} \d \alpha (x)
\d\beta(y)=\delta_{mn},  \ \label {IZBOPS} \ee 
and the associated biorthogonal polynomials are sometimes called the Itzykson--Zuber BOPs, in short, the IZBOPs.  
 
Several algebraic structural properties of these
polynomials and their recurrence relation (both multiplicative and
differential) have been thoroughly analyzed in the previously cited
papers for densities of the form $\d\alpha(x) = {\rm e}^{-V_1(x)}\d x,
\ \d \beta(y) = {\rm e}^{-V_2(y)}\d y$ for {\it polynomials potentials}
$V_1(x), \ V_2(y)$ and for potentials with rational derivative (and
hard--edges) in \cite{Bertosemiclass}.

We recall that while ordinary OPs satisfy a multiplicative
three--term recurrence relation, the BOPs defined by \eqref{IZBOPS}
solve a longer recurrence relation of length related to the degree of
the differential $\d V_j(x)$ over the Riemann sphere
\cite{Bertosemiclass}; a direct (although not immediate)
consequence of the finiteness of the recurrence relation is the fact
that these BOPs (and certain integral transforms of them) are
characterized by a Riemann--Hilbert problem for a matrix of size equal
to the length of the recurrence relation (minus one).
The BOPs introduced in this paper share all these features, although in some respects they are closer to the ordinary orthogonal polynomials than to the IZBOPs. 

The relevant two--matrix model our polynomials are related to was introduced in \cite{Paper2}.  We now give a brief summary of that work.  
Consider the set of pairs $\mathcal H_+^{(2)}:=\{ (M_1,M_2)\} $ of Hermitean {\it positive-definite} matrices endowed with the ($U(N)$--invariant) Lebesgue measure denoted by $\d M_1\d M_2$. 
Define then the probability measure on this space by the formula:
\be
\d \mu(M_1,M_2) = \frac 1{\mathcal Z_N^{(2)}} \frac {\alpha'(M_1)\beta'(M_2) \d M_1 \d M_2}{\det(M_1+M_2)^N}
\ee
where $\mathcal Z_N^{(2)}$ (the {\it partition function}) is a normalization constant, while  $\alpha'(M_1), \beta'(M_2)$ stand for the product of the densities $\alpha', \beta'$ (the Radon--Nikodym derivatives of the measures $\d \alpha,\d\beta$ with respect to the Lebesgue measure)  over the (positive) eigenvalues of $M_j$.

This probability space is similar to the two--matrix model discussed briefly above for which the coupling between matrices is ${\rm e}^{N \tr M_1M_2}$ \cite{EynardMehta} instead of $\det(M_1+M_2)^{-N}$. The connection with our BOPs (\ref{CBOPs}) is analogous to the connection between ordinary orthogonal polynomials and the Hermitean Random matrix model \cite{MehtaBook}, whose probability space is the set of Hermitean matrices $\mathcal H_N$ equipped with the measure 
$
\d\mu_1(M):= \frac 1{\mathcal Z_N^{(1)}}{\alpha'(M)} \d M. 
$
In particular, we show in \cite{Paper2} how the statistics of the eigenvalues of the two matrices $M_j$ can be described in terms of the biorthogonal polynomials we are introducing in the present work. A prominent role in the description of that statistics  is played by the generalized Christoffel--Darboux identities we develop in Section \ref{Section6}. 

We now summarize the main results of the paper:
\begin{itemize}
\item [-] for an arbitrary totally positive kernel $K(x,y)$ and arbitrary positive measures $\d\alpha,\d \beta$ on $\R_+^2$ we prove that the matrix of bimoments $I_{ab}:= \iint_{\R_+^2} x^a y^b K(x,y)\d \alpha(x) \d\beta( y)$ is totally positive ({\bf Thm. \ref{thm:I}});
\item[-] this implies that there exist, unique, sequences of monic polynomials of degree $n$, $\wt p_n(x), \wt q_n(y)$ biorthogonal to each other as in (\ref{Korth}); we prove that they have {\bf positive and simple} zeroes ({\bf Thm. \ref{thm:alphazeros}});
\item [-] we then specialize to the kernel $K(x,y)= \frac 1{x+y}$; in this case the {\bf zeroes} of $\wt p_n(x)$ ($\wt q_n(y)$) {\bf are interlaced} with the zeroes of the neighboring polynomials
({\bf Thm. \ref{Sturm} });
\item [-] they solve a {\bf four--term} recurrence relation  as specified after \eqref{CBOPs} ({\bf Cor. \ref{fourterm}});
\item [-] they satisfy {\bf Christoffel--Darboux identities} ({\bf Prop. \ref{propCDI}, Cor. \ref{cor:CDIuni}, Thms. \ref{thm:ECD1}, \ref{thm:ECD2}})
\item [-] they solve a {\bf Hermite-Pad\'{e} } approximation problem to a novel type of 
Nikishin systems ({\bf Sec. \ref{sec:AproxPerfD}, Thms. \ref{thm:Padeq}, \ref{thm:Padep}});
\item [-] they can be characterized by a $3\times 3$ {\bf Riemann--Hilbert problems}, ({\bf Props. \ref{RHP1}, \ref{RHP2}}) ;
\end{itemize}
 In the follow-up paper we will explain the relation of the asymptotics of the BOPs introduced in this paper with 
a rigorous asymptotic analysis for continuous (varying) measures $\d\alpha, \d \beta$ using the nonlinear steepest descent method \cite{Paper3}.

\section{Biorthogonal polynomials associated to a totally positive kernel} %
\label{SecPosKer}
 As one can see from the last section 
 the kernel 
$K(x,y)=\frac{1}{x+y}, x,y >0$, which we will refer to as the Cauchy kernel, 
plays a significant, albeit mysterious, role.  We now turn to explaining the 
role of this kernel.  
We recall, following \cite{Karlin}, the definition of the totally positive 
kernel.  
\bd 
\label{posker}
A real function $K(x,y)$ of two variables ranging over linearly ordered 
sets $\mathcal X$ and $\mathcal Y$, respectively, is said to be 
totally positive (TP) if for all 
\be
x_1<x_2<\cdots < x_m, \quad y_1<y_2<\cdots < y_m \quad x_i\in \mathcal X, y_j\in \mathcal Y, 
m \in \N
\ee 
we have 
\be
\det\le[K(x_i,y_j)\ri]_{1\leq i,j\leq m} >0
\ee
\ed 
We will also use a discrete version of the same concept.  
\bd 
A matrix $A:=[a_{ij}],\, i,j=0,1,\cdots n$ is said to be totally positive (TP) if  all its minors are 
strictly positive.  A matrix $A:=[a_{ij}],\, i,j=0,1,\cdots n$ is said to be totally nonnegative (TN) if  all its minors are nonnegative.  A TN matrix $A$ is said to be oscillatory if some positive integer power 
of $A$ is TP.  

\ed 
Since we will be working with matrices of infinite size we introduce a concept of the 
principal truncation.  
\bd A finite $n+1$ by $n+1$ matrix $B:=[b_{i,j}], i,j=0,1, \cdots n$ is said to be the principal truncation of an infinite matrix 
$A:=[a_{ij}],\, i,j=0,1,\cdots \, $ if $b_{i,j}=a_{i,j}, i,j=0,1, \cdots n$.  
In such a case $B$ will be denoted $A[n]$.  
\ed 
Finally, 
\bd An infinite matrix 
$A:=[a_{ij}],\, i,j=0,1,\cdots $  is said to be TP (TN) if $A[n]$ is TP (TN) for every $n=0,1,\cdots$.  
\ed

 \bd Basic Setup \label{def:K}

Let $K(x,y)$ be a {\bf totally positive kernel} on $\R_+\times \R_+$ 
and let $\d\alpha,\d \beta$ be two Stieltjes measures on $R_+$.  We make two simplifying assumptions to avoid degenerate cases: 
\begin{enumerate}
\item $0$ is not an atom of either of the measures (i.e. $\{0\}$ has zero measure).  
\item $\alpha$ and $\beta$ have infinitely many points of increase.  
\end{enumerate}
We furthermore assume: 
\begin{enumerate}
\item [3.] the polynomials are dense in the corresponding Hilbert spaces $H_{\alpha}:=L^2(\R_+,\d \alpha)$, 
 $H_{\beta}:=L^2(\R_+,\d \beta)$, 
 \item [4.]the map 
$ 
 \ds K: H_{\beta}\rightarrow H_{\alpha}$, $\ds Kq(x):=\int K(x,y)q(y)\d \beta(y)
 $
 is bounded, injective and has a dense range in $H_{\alpha}$.  
 \end{enumerate}
 \ed
 Under these assumptions $K$ provides a non-degenerate pairing between $H_{\beta}$ and $H_{\alpha}$: 
 \be \label{def:pairing}
 \langle a | b\rangle= \iint a(x) b(y) K(x,y){\d\alpha\d\beta}, \quad a\in H_{\alpha}, b\in H_{\beta}.  
 \ee 
 \br Assumptions ~3 and ~4 could be weakened, especially the density 
 assumption, but we believe the last two assumptions are the most natural to work with in the Hilbert space set-up of the theory.  \er
 
 Now, let us 
 consider the matrix $ \I$ of generalized bimoments
 \be \label{eq:bimoments}
[\I]_{ij} =  I_{ij}:= \int\int x^i y^j K(x,y) \d \alpha(x) \d\beta(y)\ . 
 \ee

 \bt \label{thm:I}
The semiinfinite matrix $\I$ is TP. 
 \et
 \begin{proof}
  According to a theorem of Fekete, (see Chapter 2, Theorem 3.3 in \cite {Karlin} ), we only need to consider minors of consecutive rows/columns.
 Writing out the determinant, 
 \bew
\Delta_n^{ab}:=  \det [I_{a+i, b+j}]_{0\leq i,j\leq n-1}
 \eew
 we find 
 \begin{align*}
 && \Delta_n^{ab} = \sum_{\sigma\in S_n}\epsilon(\sigma) \iint \prod_{j=1}^n x_j^a y_j^b \prod_{j=1}^n x_j^{\sigma_j-1} y_j^{j-1} K(x_j,y_j) \d^n\alpha(X) \d^n\beta(Y) =\\
 &&
 \iint C(X)^aC(Y)^b \Delta(X) \prod_{j=1}^{n} y_j^{j-1} \prod_{j=1}^{n} K(x_j,y_j) \d^n\alpha \d^n \beta.
 \end{align*}
 Since our intervals are subsets of $\R_+$ we can absorb the powers of $C(X), C(Y)$ into the measures to simplify the notation.  
 Moreover, the function $S(X,Y):= \prod_{j=1}^{n} K(x_j, y_j)$ enjoys the following simple property
 \bew
 S(X, Y_\sigma) = S (X_{\sigma^{-1}} ,Y)\, 
 \eew
 for any $\sigma \in S_n$.  Finally, the product measures $\d^n\alpha = \d ^n\alpha(X), \d^n\beta = \d^n \beta(Y)$ are clearly permutation invariant.
 
 Thus, without any loss of generality, we only need to show that 
 \bew
D_n:=  \iint \Delta(X) \prod_{j=1}^{n} y_j^{j-1} S(X,Y) \d^n\alpha \d^n \beta\, >0, 
 \eew
which is tantamount to showing positivity for $a=b=0$.   
First, we symmetrize $D_n$ with respect to the variables $X$; this produces
 \begin{align*}
 D_n = \frac 1{n!} \sum_{\sigma \in S_n}  \iint \Delta(X_\sigma) \prod_{j=1}^{n} y_j^{j-1} S(X_\sigma,Y) \d^n\alpha \d^n \beta = 
  \frac 1{n!} \sum_{\sigma \in S_n}  \iint \Delta(X)\epsilon(\sigma)  \prod_{j=1}^{n} y_j^{j-1} S(X,Y_{\sigma^{-1}} ) \d^n\alpha \d^n \beta = \cr
  \frac 1{n!} \sum_{\sigma \in S_n}  \iint \Delta(X)\epsilon(\sigma)  \prod_{j=1}^{n} y_{\sigma_j}^{j-1} S(X,Y) \d^n\alpha \d^n \beta 
   = \frac 1{n!}  \iint \Delta(X)  \Delta(Y)  S(X,Y)  \d^n\alpha \d^n \beta. 
 \end{align*}
 Subsequent symmetrization over the $Y$ variables does not change the value of the integral and we obtain (after restoring the definition of $S(X,Y)$)
 \begin{align*}
 D_n = \frac 1{(n!)^2} \sum_{\sigma\in S_n} \epsilon(\sigma)  \iint \Delta(X)  \Delta(Y)  \prod_{j=1}^{n} K(x_j,y_{\sigma_j})  \d^n\alpha \d^n \beta 
 =\\
\frac 1{(n!)^2}   \iint \Delta(X)  \Delta(Y)  \det[ K(x_i,y_j)]_{i,j\leq  n} \d^n\alpha \d^n \beta.   
 \end{align*}
Finally, since $\Delta(X)  \Delta(Y)  \det[ K(x_i,y_j)]_{i,j\leq  n} \d^n\alpha \d^n \beta$ is permutation invariant, it suffices to integrate over the region
$0<x_1<x_2<\cdots <x_n \times 0<y_1<y_2<\cdots <y_n$, and, as a result
\be \label{prinmin}
D_n=\iint_{\substack{0<x_1<x_2<\cdots <x_n \\
0<y_1<y_2<\cdots <y_n}} \Delta(X)  \Delta(Y)  \det[ K(x_i,y_j)]_{i,j\leq  n} \d^n\alpha \d^n \beta. 
\ee
Due to the total positivity of the kernel $K(x,y)$ the integrand is a positive function of all variables and so the integral must be strictly positive.
 \end{proof}
To simplify future computations we define
$
[x] := (1,x,x^2, \dots)^T
$
so that the matrix of generalized bimoments \eqref{eq:bimoments} is simply given by: 
$
\I = \langle [x]|[y]^T\rangle.\label{219}
$
 Now, let $\Lambda$ denote the semi-infinite upper shift matrix. Then we observe that multiplying the measure $\d\alpha(x)$ by $x^i $ or, 
multiplying $ \d \beta(y)$ by $y^j $, is tantamount to multiplying 
 $\I$ on the left by $\Lambda^i$, or 
on the right by $(\Lambda^T )^j$ respectively, which gives us a whole family of bimoment matrices 
 associated with the same $K(x,y)$ but different measures.  Thus we have
 \bc \label{cor:ILambda}
 For any nonnegative integers $i,j$ the matrix of generalized bimoments 
 $ \Lambda^i \I (\Lambda^T)^j$ is TP. 
 \ec

We conclude this section with a few comments about the scope of Theorem \ref{thm:I}.   \br
 Provided that the negative moments are well defined, the theorem then applies to the doubly infinite matrix $I_{i,j}$, $i,j\in \Z$.
 \er
 \br
 If the intervals are $\R$ and $K(x,y) = {\rm e}^{xy}$ then the proof above fails because we cannot re-define the measures by multiplying by powers of the variables, since they become then signed measures, so in general the matrix of bimoments is {\bf not} totally positive. Nevertheless the proof above shows (with $a=b=0$ or $a,b\in 2\Z$) that the matrix of bimoments is positive definite and --in particular-- the biorthogonal polynomials always exist, which is known and proved in \cite{KenNick}.
 \er
 \subsection {Biorthogonal polynomials}
 Due to the total positivity of the matrix of bimoments in our setting, there exist uniquely defined two sequences of monic polynomials 
 \bew
 \widetilde p_n(x) = x^n + \dots\ , \  \ \widetilde q_n(y)= y^n + \dots
 \eew
 such that 
 \bew
 \iint \widetilde p_n(x) \widetilde q_m(y) K(x,y) \d\alpha(x) \d\beta(y) = h_n \delta_{mn}\label{Korth}\ .
 \eew
 Standard considerations (Cramer's Rule) show that  they are provided by the following formul\ae
 \bea
 \widetilde p_n(x) = \frac 1{D_n} \det \le[ \begin{array}{ccc|c}
 I_{00}&\dots &I_{0n-1}& 1\cr
 \vdots &&\vdots&\vdots\cr
 I_{n0}&\dots &I_{nn-1}&x^n
 \end{array}
 \ri]
   \qquad \widetilde q_n(y) = \frac 1{D_n} \det \le[ \begin{array}{ccc}
 I_{00}&\dots &I_{0n}\cr
 \vdots &&\vdots\cr
 I_{n-10}&\dots &I_{n-1n}\cr
 \hline 
 1&\dots&y^n
 \end{array}
 \ri]\\
 h_n = \frac {D_{n+1}}{D_n} > 0,  
 \eea
where $D_j>0$ by equation \eqref{prinmin}.   
For convenience we re-define the sequence in such a way that they are also {\bf normalized} (instead of monic), by  dividing them by the square root of $h_n$;
 \bea \label{def:pq}
 &p_n(x) =\frac{1} {\sqrt{D_nD_{n+1}}} \det \le[ \begin{array}{ccc|c}
 I_{00}&\dots &I_{0n-1}& 1 \cr
 \vdots &&\vdots&\vdots\cr
 I_{n0}&\dots &I_{nn-1}&x^n
 \end{array}
 \ri],\\
 &q_n(y) = \frac{1}{\sqrt{{D_nD_{n+1}}}} \det \le[ \begin{array}{ccc}
 I_{00}&\dots &I_{0n}\cr
 \vdots &&\vdots\cr
 I_{n-10}&\dots &I_{n-1n}\cr
 \hline 
 1&\dots&y^n
 \end{array}
 \ri] \label{BOPs} 
 \eea
 Thus $\langle p_n| q_m\rangle= \delta_{nm}$.

We note also that the BOPs can be obtained by triangular transformations of $[x], [y]$
\be
\p(x) =S_p  [x]\ ,\ \ \ \q(y) = S_q[y]\ ,\qquad [x]=[1,x,x^2,\dots]^t\label{220} 
\ee
where $S_{p,q}$ are (formally) invertible lower triangular matrices 
such that $S_p^{-1}(S_q^{-1})^T=\I$, where, we recall, $\I$ is the 
generalized bimoment matrix.  Moreover, our BOPs satisfy, by construction, 
the recursion relations:
 $$
 x p_i(x) = X_{i,i+1}p_{i+1}(x)+X_{i,i}p_i(x)+\cdots X_{i,0}p_0(x), \qquad 
 y  q_i(y) = Y_{i,i+1}q_{i+1}(y)+Y_{i,i}q_i(y)+\cdots Y_{i,0}q_0(y ), 
 $$
 which will be abbreviated  as 
 \be \label{def:XY}
 x \p(x) = \X\p(x)\ ,\ \ y \q(y) ^T= \q(y)^T\Y^T, 
 \ee
 where $\X$ and $\Y$ are Hessenberg matrices with positive entries on the supradiagonal, 
 and $\p(x)\, \q(y)$ are infinite column vectors $\p(x)^T:=(p_0(x),p_1(x),p_2(x),\dots)^t, \, 
 \q(y)^T:=(q_0(y),q_1(y),q_2(y),\dots)^T$ respectively.   
 
 The biorthogonality can now be written as 
$
\langle \p | \q^T\rangle= Id
$ 
where $Id$ denotes the semi-infinite identity matrix.  
Moreover 
\be \label{eq:XY}
\langle x\p | \q^T\rangle =  \X\ ,\qquad \langle \p|y\q^T\rangle = \Y^T
\ee 
\br The significance of the last two formulas lies in the fact that the operator of multiplication 
is no longer symmetric with respect to the pairing $\langle \bullet|\bullet\rangle$ and as a result the matrices 
$\X$ and $\Y^T$ are distinct.   \er

 \subsection{Simplicity of the zeroes}
In this section we will use the concept of a Chebyshev system of order $n$ and a closely related concept of a Markov sequence.  We refer to \cite{ns} and \cite{gantmacher-krein} for more information.   
The following theorem is a convenient restatement of Lemma 2 in \cite{gantmacher-krein}, p.137.  For easy display we replace determinants with wedge products.  
 \bt \label{thm:CS}
Given a system of continuous functions $\{u_i(x)|i=0\cdots n\}$ let us define the vector field 
\be
\u(x) =\begin{bmatrix} u_0(x), &u_1(x), & \hdots , & u_n(x)  \end{bmatrix}^T, \qquad x\in U. 
\ee 
Then 
$\{u_i(x)|i=0\cdots n\}$ is a Chebyshev system of order $n$ on $U$ 
iff 
the top exterior power
\be 
\u(x_0)\wedge \u(x_1)\wedge \cdots \u(x_n) \ne 0
\ee 
for all $x_0<x_1<\cdots <x_n$ in $U$.  
Furthermore, for $\{u_i(x)|i=0\cdots \}$, if we denote the truncation of $\u(x)$ to the 
first $n+1$ components by $\u_n(x)$, then $\{u_i(x)|i=0\cdots \}$ is a Markov system 
iff
the top exterior power
\be 
\u_n(x_0)\wedge \u_n(x_1)\wedge \cdots \u_n(x_n) \ne 0
\ee 
for all $x_0<x_1<\cdots <x_n$ in $U$ and all $n\in \N$.

\et
 The following well known theorem is now immediate
 
 \bt \label{thm:signchange} 
Suppose $\{u_i(x)|i=0\cdots n\}$ is a Chebyshev system of order $n$ on $U$, 
and suppose we are given $n$ distinct points $x_1, \cdots x_n$ in $U$. 
Then, up to a multiplicative factor, the only 
generalized polynomial $ P(x)=\sum_{i=0}^n a_i u_i(x)$, which vanishes precisely at $x_1, \cdots x_n$ in $U$ 
is given by 
\be 
P(x)=\u(x)\wedge \u(x_1)\wedge \cdots \u(x_n) 
\ee 
\et 
\bt \label{thm:usignchange} 
Denote by $u_i(x)=\int K(x,y)y^i d\beta(y), \, i=0\cdots n$.  
Then $\{u_i(x)|i=0\cdots n\}$ is a Chebyshev system of order $n$ on $\R _+$.  
Moreover, $P(x)$ as defined in Theorem \ref{thm:signchange} changes sign each 
time $x$ passes through any of the zeros $x_j$.  
\et 
\begin{proof}
It is instructive to look at the computation.  Let $x_0<x_1<\cdots x_n$, then using multi-linearity of the exterior product, 
\begin{align*}
&P(x_0)=\u(x_0)\wedge \u(x_1)\wedge \cdots \u(x_n) =\\
&\int 
K(x_0,y_0)K(x_1,y_1)\cdots K(x_n,y_n)[y_0]_n\wedge [y_1]_n\wedge \cdots
\wedge [y_n]_nd\beta(y_0)\cdots d\beta(y_n)= \\
&\frac{1}{n!}\int 
\det[K(x_i, y_j)]_{i,j=0}^n \Delta(Y)d\beta(y_0)\cdots d\beta(y_n)= 
\int_ 
{y_0<y_1<\cdots y_n}\det[K(x_i, y_j)]_{i,j=0}^n 
\Delta(Y) d\beta(y_0)\cdots d\beta(y_n),   
\end{align*} 
where 
$
[y]_n=\begin{bmatrix}y^0, &y^1, &\hdots &y^n \end{bmatrix}^T.  
$
Thus $P(x_0)>0$.  The rest of the proof is the argument about the sign of the 
integrand. To see how sign changes we observe that 
the sign of $P$ depends only on the ordering of 
$x, x_1, x_2, \cdots x_n$, in view of the total positivity of 
the kernel.  In other words, the sign of $P$ is $sgn(\pi)$ where 
$\pi$ is the permutation rearranging $x, x_1, x_2, \cdots x_n$ in an increasing sequence.    
\end{proof}
\bc \label{cor:f}
Let $\{f_i(x):=\int K(x,y)q_i(y) d\beta(y), | i=0\cdots \}$.  
Then $\{f_i(x)|i=0\cdots n\}$ is a Markov sequence  on $\R _+$,
\ec
\begin{proof}
Indeed, Theorem \ref{thm:CS} implies that the group $GL(n+1)$ acts on 
the set of Chebyshev systems of order $n$.  It suffices now to observe 
that $q_j$ are obtained from $\{1,y, \cdots, y^n\}$ by an invertible 
transformation.   
\end{proof}
\br Observe that  $\{f_i(x)|i=0\cdots n\}$ is a Markov sequence regardless of biorthogonality. \er 
Biorthogonality enters however in the main theorem
\bt \label{thm:alphazeros}
The zeroes of $p_n, q_n$ are all simple and  positive. They fall within the convex hull of the support of the measure $\d\alpha$ (for $p_n$'s) and $\d\beta$ (for the $q_n$'s).
\et 
\begin{proof} We give first a proof for $p_n$.  
 The theorem is trivial for $n=0$.  For $1\leq n$ , let us 
suppose $p_n$ has $r<n$ zeros of odd order in  the convex full of $supp(\d \alpha)$.  In full analogy with 
the classical case, $1\leq r$, since 
\bew  
\int p_n(x) f_0(x)d\alpha(x)=\iint p_n(x)K(x,y)d\alpha(x) d\beta(y)
=0
\eew
by biorthogonality, forcing, in view of positivity of $K(x,y)$, $p_n(x)$ 
to change sign in the convex hull of 
$supp(\d\alpha)$.  In the general case, 
denote the zeros by $x_1<x_2<\cdots x_r$.  Using a Chebyshev 
system $f_i(x), i=0, \cdots r$ on $\R_+$ we can construct a unique, up to a 
multiplicative constant, generalized polynomial which vanishes exactly 
at those points, namely 
\be 
R(x)=F(x)\wedge F(x_1)\wedge F(x_2)\wedge \cdots \wedge F(x_r)
\ee
where 
\begin{equation*}
F(x) =\begin{bmatrix} f_0(x) &  f_1(x)& \cdots & f_r(x)  \end{bmatrix}^t, \qquad x\in \R. 
\end{equation*}
It follows then directly from biorthogonality 
that 
\be 
\int p_n(x) F(x)\wedge F(x_1)\wedge F(x_2)\wedge \cdots \wedge F(x_r)d\alpha(x)
=0
\ee
On the other hand, $R(x)$ is 
proportional to $P(x)$ in Theorem \ref{thm:signchange} which, by Theorem \ref{thm:usignchange}, changes sign at each of its zeroes,� so 
the product $p_n(x)R(x)$ is nonzero and of fixed sign over $\R _+\setminus \{x_1,
x_2,\cdots, x_r\}$.  Consequently, the integral is nonzero, since $\alpha$ is assumed to have 
infinitely many points of increase.  Thus, in view of the contradiction, $r\geq n$, hence $r=n$, for $p_n$ is a polynomial of degree $n$.  
The case of $q_n$ follows by 
observing that the adjoint $K^*$ is also a TP kernel and hence it suffices 
to switch $\alpha$ with $\beta$ throughout the argument given above.  \end{proof}

\bl 
In the notation of Corollary \ref{cor:f}
$f_n(x)$ 
has $n$ zeros and $n$ sign changes in the convex hull of $supp(\d \alpha)$.  
\el 
\begin{proof} Clearly, since $\{u_i(x)|i=0\cdots n\}$ is a Chebyshev system of order $n$ on $\R _+$, the number of zeros of $f_n$ cannot be greater 
than $n$.  Again, from 
\bew 
\int f_n(x) p_0(x) \d \alpha(x)=0,
\eew
we conclude that $f_n$ changes sign at least once within the convex hull of $supp(\d\alpha)$.  Let then $x_1<x_2<\cdots x_r$, $1\leq r\leq n$ be all zeros of $f_n$ within the convex hull of $supp(\d\alpha)$ at which $f_n$ changes its sign. Thus, on one hand, 
\bew 
\int \epsilon \ \prod _{i=1}^r (x-x_i) f_n(x)\d\alpha(x) >0, \qquad \epsilon =\pm, 
\eew
while, on the other hand, using biorthogonality we get 
\bew 
\int \epsilon \ \prod _{i=1}^r (x-x_i) f_n(x)\d\alpha(x) =0, \qquad \epsilon =\pm, 
\eew
which shows that $r=n$.  \end{proof}

In view of Theorem \ref{thm:signchange}
the statement about the zeros of $f_n$ has the following 
corollary
\bc {\bf Heine-like representation for $f_n$}
\be 
f_n(x)=C u(x)\wedge u(x_1)\wedge u(x_2) \cdots \wedge u(x_n)
\ee 
where $x_j$ are the zeros of $f_n$  and $C$ is a constant.  
\ec

 \par \vskip 5pt
 
 \section{Cauchy BOPs}
 From now on we restrict our attention to the particular case of the totally positive kernel, 
 namely, the Cauchy kernel
 \be \label{eq:Ckernel}
 K(x,y) = \frac 1{x+y}
 \ee
 whose associated biorthogonal polynomials will be called  Cauchy BOPs .
 Thus, from this point onward, we will be studying the general properties of BOPs for the pairing 
 \be
 \iint p_n(x) q_m(y) \frac {\d\alpha(x) \d\beta(y)} {x+y}= \langle p_n|q_m\rangle\ .
 \ee
 Until further notice, we do not assume anything about the relationship between the two measures $\d \alpha, \d\beta$, other than what is in the basic setup of Definition 
\ref{def:K}.

 \subsection{Rank One Shift Condition}
 It follows immediately from equation \eqref{eq:Ckernel} that 
 \be
 I_{i+1,j} + I_{i,j+1} = \langle x^{i+1}|y^j\rangle +\langle x^i|y^{j+1}\rangle= \int x^i \d\alpha \int y^j \d\beta\ ,
 \ee
 which, with the help of the shift matrix $\Lambda$ and the matrix of 
bimoments $\I$, can be written as: 
 \begin{align*}
& \Lambda \I + \I \Lambda^T=\boldsymbol \alpha \boldsymbol \beta^T,\\
 &\boldsymbol \alpha = (\alpha_0, \alpha_1,\dots)^T\ ,\ \ \alpha_j = \int x^j \d \alpha(x)>0,  \\
 &\boldsymbol\beta  = (\beta_0, \beta_1, \dots)^T\ ,\ \ \beta_j = \int y^j \d\beta(y)>0.  
 \end{align*}
Moreover, by linearity and equation \eqref{eq:XY}, we have
\be \label{eq:XYT}
\X + \Y^T  = {\boldsymbol \pi} {\boldsymbol \eta}^T\ ,\quad 
{\boldsymbol \pi} := \int \p \d\alpha\ ,\ \ {\boldsymbol \eta}:= \int \q \d\beta\ ,\quad
{\p}(x):= (p_0(x),p_1(x),\dots)^t\ ,\ {\q}(y):= (q_0(y),q_1(y),\dots)^t
\ee
which connects the multiplication operators in $H_{\alpha}$ and $H_{\beta}$.  
Before we elaborate on the nature of this connection we 
need to clarify one aspect of equation \eqref{eq:XYT}.  
\br One needs to exercise a great deal of caution using the matrix relation 
given by equation \eqref{eq:XYT}.  Its only rigorous meaning is in 
action on vectors with finitely many nonzero entries or, equivalently, 
this equation holds for all principal truncations. 
\er 

 \bp
\label{etapi}
The vectors ${\boldsymbol \pi}, {\boldsymbol \eta}$ are strictly positive (have nonvanishing positive coefficients).
\ep

\begin{proof}
We prove the assertion only for ${\boldsymbol \pi}$, the one for ${\boldsymbol \eta}$ being obtained by interchanging the roles of $\d\alpha$ and $\d\beta$.

~From the expressions (\ref{BOPs}) for $p_n(x)$ we immediately have
\bea
\pi_n = \sqrt{\frac 1{ D_nD_{n+1}}} \det
\le[ \begin{array}{ccc|c}
I_{00}&\dots &I_{0n-1}& \alpha_0 \cr
\vdots &&\vdots&\vdots\cr
I_{n0}&\dots &I_{nn-1}&\alpha _n
\end{array}
\ri].  
\eea
Since we know that $D_n>0$ for any $ n\geq 0$ we need to prove the positivity of the other determinant. 
Determinants of this type were studied in Lemma 4.10 in \cite{ls-cubicstring}.

We nevertheless give a complete proof of positivity.  First, we
observe that
\bea
\pi_n \sqrt{D_{n+1}D_n} &&=  \sum_{\sigma \in S_{n+1}}\epsilon(\sigma) \int \prod_{j=1}^{n+1} x_j^{\sigma_j -1} \prod_{j=1}^{n} y_j^{j-1}
\frac{\d^{n+1}\alpha\d^n\beta}{\prod_{j=1}^n(x_j + y_j)}= \cr
&& = \int \Delta(X_1^{n+1}) \prod_{j=1}^{n} y_j^{j-1}
\frac{\d^{n+1}\alpha\d^n\beta}{\prod_{j=1}^n(x_j + y_j)}. 
\eea
Here the symbol $X_1^{n+1}$ is to remind that the vector consists of $n+1$ entries (whereas $Y$ consists of $n$ entries) and that the Vandermonde determinant is taken accordingly. Note also that the variable $x_{n+1}$ never appears in the product in the denominator.
Symmetrizing the integral in the $x_j$'s with respect to labels $j=1, \dots, n$ , but leaving $x_{n+1}$ fixed, gives
\bea
\pi_n\sqrt{D_{n+1}D_n} = \frac 1 {n!} \int \Delta(X_1^{n+1}) \Delta(Y) \frac{\d^{n+1}\alpha\d^n\beta}{\prod_{j=1}^n(x_j + y_j)}. 
\eea
Symmetrizing now with respect to the whole set $x_1, \dots, x_{n+1}$ we obtain
\bea
\pi_n\sqrt{D_{n+1}D_n} = \frac 1 {n!(n+1)!} \int \Delta(X_1^{n+1}) \Delta(Y)
\det \le[
\begin{array}{ccc}
K(x_1,y_1) & \dots & K(x_{n+1},y_1)\cr
\vdots &&\vdots \cr
K(x_1,y_{n}) & \dots & K(x_{n+1},y_{n})\cr
1& \dots  & 1
\end{array}\ri]
\d^{n+1}\alpha\d^n\beta
\eea

Moreover, since the integrand is permutation invariant, it suffices to integrate over the region
$0<x_1<x_2<\cdots <x_n<x_{n+1} \times 0<y_1<y_2<\cdots <y_n$, and, as a result
\begin{equation}
\begin{split}
&\pi_n\sqrt{D_{n+1}D_n}=\\
&\iint_{\substack{0<x_1<x_2<\cdots <x_{n+1} 
0<y_1<y_2<\cdots <y_n}} \Delta(X_1^{n+1})  \Delta(Y)\det \le[\begin{array}{ccc}
K(x_1,y_1) & \dots & K(x_{n+1},y_1)\cr
\vdots &&\vdots \cr
K(x_1,y_{n}) & \dots & K(x_{n+1},y_{n})\cr
1& \dots  & 1
\end{array}\ri]
\d^{n+1}\alpha\d^n\beta.  
\end{split}
\end{equation}
We thus need to prove that the determinant containing the Cauchy kernel $\frac 1{x+y}$ is positive  for $0<x_1<x_2<\dots <x_{n+1}$ and $0<y_1<y_2<\dots<y_n$.
It is not difficult to prove that
\be
\det \le[
\begin{array}{ccc}
\frac 1{x_1+y_1} & \dots & \frac 1{x_{n+1}+y_1}\cr
\vdots &&\vdots \cr
\frac 1{x_1+y_{n}} & \dots & \frac 1{x_{n+1}+y_{n}}\cr
1& \dots  & 1
\end{array}\ri]
= \frac{\Delta(X_1^{n+1}) \Delta(Y)}
{\prod_{j=1}^{n+1} \prod_{k=1}^{n} (x_j + y_k)}
\ee
and this function is clearly positive in the above range.\end{proof}

\par \vskip 5pt

 \subsection{Interlacing properties of the zeroes}
~From (\ref{219}), (\ref{220}) and (\ref {def:XY}) the following factorizations are valid for 
all principal truncations:
$$
\I=S_p^{-1} (S_q^{-1})^T\ ,\quad \X= S_p \Lambda  (S_p)^{-1}\ , \quad \Y= S_q \Lambda S_q^{-1}\ .
$$
Moreover, since $\I$ is TP, the triangular matrices $S_p^{-1} $ and $S_q^{-1} $
are totally nonnegative (TN) \cite{Cryer2} and have the same diagonal entries: the $n$th
diagonal entry being $\sqrt{D_n/D_{n-1}}$.  Furthermore, one can amplify the statement 
about $S_p^{-1} $ and $S_q^{-1} $ using another result of Cryer (\cite{Cryer1}) which implies that 
both triangular matrices are in fact triangular TP matrices (all non-trivial in the sense defined 
in \cite{Cryer1} minors are strictly positive).  This has the immediate consequence 
\begin{lemma}\label{lem:IXIY}
All principal truncations $\X[n], \Y[n]$ are invertible.  
\end{lemma}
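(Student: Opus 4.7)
The plan is to identify the spectrum of $\mathcal{X}[n]$ with the zero set of the biorthogonal polynomial $p_{n+1}$ and then invoke Theorem \ref{thm:alphazeros}, which guarantees that these zeros are strictly positive (hence nonzero).

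To this end, I would rewrite the recursion $x\, p_i(x) = \sum_{j\leq i+1} X_{ij}\,p_j(x)$ as the vector identity
\[
(xI - \mathcal{X}[n])\,\p_n(x) \;=\; X_{n,n+1}\, p_{n+1}(x)\, e_n,
\]
where $\p_n(x) := (p_0(x),\ldots,p_n(x))^T$ and $e_n$ is the last standard basis vector of $\C^{n+1}$. This identity just records the single fact that the only term truncated away in forming $\mathcal{X}[n]$ is the $X_{n,n+1}\,p_{n+1}(x)$ contribution that would appear in row $n$ of the full (un-truncated) matrix-vector product.

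Specializing to $x = x_k$, where $x_k$ is any zero of $p_{n+1}$, yields $(x_k I - \mathcal{X}[n])\,\p_n(x_k) = 0$. Since the first component $p_0(x_k) = 1/\sqrt{D_1}$ of $\p_n(x_k)$ is a positive constant independent of $x$, the vector $\p_n(x_k)$ is nonzero, so $x_k$ is an eigenvalue of $\mathcal{X}[n]$ with eigenvector $\p_n(x_k)$. By Theorem \ref{thm:alphazeros}, $p_{n+1}$ has exactly $n+1$ distinct strictly positive zeros, which therefore exhaust the spectrum of the $(n+1)\times(n+1)$ matrix $\mathcal{X}[n]$; in particular $0$ is not an eigenvalue, so $\mathcal{X}[n]$ is invertible and in fact $\det\mathcal{X}[n] = \prod_{k=1}^{n+1} x_k > 0$. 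The identical argument, with $(\p,\mathcal{X},p_{n+1},\d\alpha)$ replaced throughout by $(\q,\mathcal{Y},q_{n+1},\d\beta)$, handles $\mathcal{Y}[n]$.

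The only delicate checkpoint is the nonvanishing of the candidate eigenvector $\p_n(x_k)$, which reduces to $p_0(x_k)\neq 0$ and is automatic since $p_0$ is a nonzero constant by construction. Beyond this, the lemma is essentially the classical companion-matrix / characteristic-polynomial identity familiar from orthogonal polynomial theory, and no serious obstacle remains once the simplicity and positivity of zeros from Theorem \ref{thm:alphazeros} is in hand.
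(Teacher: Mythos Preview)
Your proof is correct and takes a genuinely different route from the paper's. The paper argues via the factorization $\X = S_p\Lambda S_p^{-1}$: it observes that $\det \X[n]$ equals $\det(\Lambda S_p^{-1})[n]$ up to a nonzero factor, and the latter is a specific minor of $S_p^{-1}$ that is strictly positive because $S_p^{-1}$ is a \emph{triangular TP} matrix (a result of Cryer applied to the TP Gauss factorization $\I = S_p^{-1}(S_q^{-1})^T$). Your argument instead goes through spectral theory: you identify the eigenvalues of $\X[n]$ with the zeros of $p_{n+1}$ via the truncated recursion, and then invoke Theorem~\ref{thm:alphazeros} (proved earlier for arbitrary totally positive kernels) to conclude that all eigenvalues are strictly positive.

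Your approach is more self-contained in that it avoids the appeal to Cryer's theorem on triangular total positivity, and it yields the characteristic-polynomial identity $\det(x-\X[n]) \propto p_{n+1}(x)$ as a byproduct (which the paper only records later, in Theorem~\ref{Sturm}, as a consequence of the oscillatory-matrix machinery). The paper's approach, on the other hand, is part of a larger program: the triangular TP structure of $S_p^{-1}$ is reused immediately afterward in the proof that $\X$ and $\Y$ are totally nonnegative and oscillatory, so for the paper the Cryer route is not a detour but a stepping stone. There is no circularity in your argument, since Theorem~\ref{thm:alphazeros} is established in Section~\ref{SecPosKer} independently of any of the Cauchy-kernel results in which this lemma sits.
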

\begin{proof}
~From the factorization $\X= S_p \Lambda  (S_p)^{-1} $ we conclude that it suffices to 
prove the claim for $\Lambda S_p^{-1} [n]$ which in matrix form reads: 
\bew 
\begin{bmatrix} (S_p^{-1})_{10}&(S_p^{-1})_{11}&\\
(S_p^{-1})_{20}&(S_p^{-1})_{21}&\hspace{-10pt}(S_p^{-1})_{22}& 
\begin{picture}(0,0)
\put(-50.5,-10){\line(1,-1){20}}
\put(-51,-9.5){\line(1,-1){20}}
\put(-50,-10){\line(1,-1){20}}
\put(0,-10){\hbox{\Huge $0$}}
\end{picture}
 \\
 \\
 \\
\vdots&\vdots&&(S_p^{-1})_{_{n+1, n+1}} \\
(S_p^{-1})_{n+1,0}&(S_p^{-1})_{n+1, 1}&\cdots &(S_p^{-1})_{n+1, n}\\
\end{bmatrix}.  
\eew
However, the determinant of this matrix is strictly positive, because $S_p^{-1}$ is a triangular TP.  

\end{proof}
\br 
This lemma is not automatic, since $\Lambda[n]$ is not invertible.  
\er 

We now state the main theorem of this section.  
\begin{theorem} \label{thm:TN} 
$\X$ and $\Y$ are TN.  
\end{theorem}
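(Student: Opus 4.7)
By the manifest symmetry $(\p,\d\alpha)\leftrightarrow(\q,\d\beta)$ interchanging the two families, it suffices to prove $\X$ is TN; the argument for $\Y$ is identical. The starting point is the factorization $\X=S_p\Lambda S_p^{-1}$ together with $\I = S_p^{-1} S_q^{-T}$, which in combination yield the clean rewriting
\begin{equation*}
 \X \;=\; S_p\,(\Lambda\I)\,S_q^{T}.
\end{equation*}
Here $\Lambda\I$ is TP by Corollary~\ref{cor:ILambda}, while $S_p^{-1}$ and $S_q^{-1}$ are triangular TP by Cryer's theorem (as recorded in the paragraph preceding Lemma~\ref{lem:IXIY}). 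Since $S_p$ and $S_q^T$ are lower, resp.\ upper triangular, the factorization descends to every principal truncation: $\X[n] = S_p[n]\,(\Lambda\I)[n]\,S_q^{T}[n]$.

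For an arbitrary pair of ordered index sets $I,J$ of the same cardinality, Cauchy--Binet then gives
\begin{equation*}
 \det\X[I,J] \;=\; \sum_{K,L}\det S_p[I,K]\,\det(\Lambda\I)[K,L]\,\det S_q[J,L].
\end{equation*}
The middle factor is nonnegative by total positivity of $\Lambda\I$. The outer factors are signed, but the sign is rigid: because $S_p=(S_p^{-1})^{-1}$ with $S_p^{-1}$ triangular TP, every nonzero minor $\det S_p[I,K]$ carries the fixed sign $(-1)^{\sum I+\sum K}$ (and analogously for $S_q$). I would apply Jacobi's complementary--minor identity to rewrite each $\det S_p[I,K]$ as a positive multiple of $\det S_p^{-1}[K^c,I^c]$ (and similarly for $S_q$), turning the expansion into an alternating sum of products of manifestly positive minors of the TP matrices $S_p^{-1}$, $\Lambda\I$, $S_q^{-1}$.

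The crux, and the principal obstacle, is then a combinatorial collapse of this alternating sum into a single positive quantity. I would attempt it by identifying the sum—after tracking the $K\mapsto K+1$ shift built into the action of $\Lambda$—as the Laplace expansion of a single positive minor of a larger TP matrix constructed from $S_p^{-1}$, $\I$, and $S_q^{-1}$; iterated application of the Sylvester--Desnanot--Jacobi determinantal identity should cancel the alternating signs and leave a ratio of positive minors of the shifted bimoment matrices $\Lambda^a\I\Lambda^{bT}$, all of which are TP by Corollary~\ref{cor:ILambda}. Once $\det\X[I,J]\ge 0$ is established for all ordered $I,J$, Fekete's criterion (already invoked in the proof of Theorem~\ref{thm:I}) delivers $\X$ TN, and the same argument with $p\leftrightarrow q$ gives $\Y$ TN.

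The hardest point will be executing this collapse in full generality—matching the shift action inside the alternating sum against the right Sylvester identity is combinatorially delicate. Should this resist, a fallback specific to the Cauchy kernel is available: applying Cauchy's determinant identity $\det[1/(x_c+y_d)]=\Delta(X)\Delta(Y)/\prod_{c,d}(x_c+y_d)$ inside the Andreief-type multi-integral representation of $\det\X[I,J]$ recasts each minor as an integral against the strictly positive Coulomb-gas density $\Delta(X)^2\Delta(Y)^2/\prod(x_c+y_d)\,\d\alpha\,\d\beta$, reducing nonnegativity to the sign analysis of a product of Schur-like factors that can be checked directly.
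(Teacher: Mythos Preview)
Your proposal is not a proof: the ``combinatorial collapse'' of the alternating Cauchy--Binet sum is the entire content of the theorem, and you only describe what you \emph{hope} will happen (``I would attempt'', ``should cancel''). The signs $(-1)^{\sum K+\sum L}$ coming from the inverses $S_p,S_q$ are genuine, and there is no mechanism on display that forces the alternating sum to be nonnegative. The fallback is equally unsubstantiated: you invoke an ``Andr\'eief-type multi-integral representation of $\det\X[I,J]$'' without saying what it is, and the entries $\X_{ij}=\langle xp_i\,|\,q_j\rangle$ involve the biorthogonal polynomials themselves, so no clean multi-integral with a positive density is available without further work.

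More importantly, your argument never uses the Cauchy kernel, whereas the paper's proof relies on it essentially. The paper does \emph{not} attack the minors of $\X$ directly. Instead it exploits the rank-one shift identity $\X+\Y^T=\boldsymbol\pi\boldsymbol\eta^T$ (equation~\eqref{eq:XYT}), which is specific to $K(x,y)=\frac{1}{x+y}$, together with Proposition~\ref{etapi} on the strict positivity of $\boldsymbol\pi,\boldsymbol\eta$. From the L--U factorizations $\X=\X_-\X_+$, $\Y=\Y_-\Y_+$ and the rank-one relation one extracts a further factorization
\[
\X \;=\; \Y_+^T\,(Id-\Lambda^T)^{-1}\,L_X\,\X_+,
\]
where $\X_+,\Y_+$ are bidiagonal pieces of triangular TP matrices, $(Id-\Lambda^T)^{-1}$ is the all-ones lower-triangular matrix, and $L_X$ is an explicit bidiagonal matrix with strictly positive entries. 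Each factor is TN, so the product is TN. No sign cancellation is needed; the Cauchy structure does the work. Since your route ignores this structure entirely, it would---if it worked---prove TN of $\X$ for \emph{every} totally positive kernel, a statement the paper neither claims nor proves, and which you should not assume is true.
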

\begin{proof}  
We need to prove the theorem for every principal truncation.  Let $n\geq 0$ be fixed.  
We will suppress the dependence on $n$, for example $\X$ in the body of the proof means 
$\X[n]$ etc.  First, we claim that 
$\X$ and $\Y$ admit the L-U factorization:
$ \X = \X_- \X_+,\ \Y= \Y_- \Y_+$, where $A_+$ denotes the upper triangular factor and $A_-$ is the unipotent
lower triangular factor in the Gauss factorization of a matrix $A$. Indeed,
$ \X_+= (\Lambda  S_p^{-1})_+,\ \Y_+= (\Lambda  S_q^{-1})_+$ are upper triangular components of TN matrices
$ \Lambda  S_p^{-1}$ and $\Lambda  S_q^{-1}$ 
 and thus are totally nonnegative invertible bi-diagonal matrices by Lemma \ref{lem:IXIY}.  

~From $\X + \Y^T= \boldsymbol\pi \boldsymbol\eta^T$ we then obtain 
$$ (\Y_+^T)^{-1} \X_- + \Y_- \X_+^{-1} =\left ( (\Y_+^T)^{-1}\boldsymbol\pi\right )
 \left ( \boldsymbol\eta^T  \X_+^{-1} \right ) := \boldsymbol\rho \boldsymbol\mu^T\ .
 $$

We need to show that vectors $ \boldsymbol\rho\ ,\  \boldsymbol\mu$ have positive entries. For this, notice that
\begin{align*}
\boldsymbol\rho&= ((Y_+)^T)^{-1} S_p \boldsymbol\alpha= ( ( (\Lambda S_q^{-1})_+)^T)^{-1} S_p \boldsymbol\alpha\ ,\\
\boldsymbol\mu&=((X_+)^T)^{-1} S_q\boldsymbol\beta=( ( (\Lambda S_p^{-1})_+)^T)^{-1} S_q\boldsymbol\beta.   \end{align*} 

Now, it is easy to check that if the matrix of generalized bimoments $\I$ is replaced 
by $\I\Lambda ^T$ (see Corollary \ref{cor:ILambda} ) then $S_p \rightarrow (( (\Lambda S_q^{-1})_+)^T)^{-1} S_p $, 
while $\boldsymbol\alpha$ is unchanged,  which implies that $\boldsymbol\rho$ is a new  $\boldsymbol\pi $
in the notation of Proposition \ref{etapi}
  and hence positive by the same Proposition.  Likewise, considering 
the matrix of generalized bimoments $\Lambda \I$, for which $\boldsymbol\beta$ is 
unchanged, $S_q \rightarrow (( (\Lambda S_p^{-1})_+)^T)^{-1} S_q$ and $\boldsymbol\mu$ is 
a new $\boldsymbol\eta$ in the notation of Proposition $\ref{etapi}$ implying the claim.  
 
Thus 
$$
\boldsymbol\rho= D_\rho \mathbf 1\ , \boldsymbol\mu = D_\mu \mathbf 1,
$$
where $ D_\rho \ ,  D_\mu $ are diagonal matrices with positive entries and $\mathbf 1$ is a vector  of 1s.

We have
$$ D_\rho^{-1} (\Y_+^T)^{-1} \X_-  D_\mu^{-1} +  D_\rho^{-1}\Y_- \X_+^{-1} D_\mu^{-1}  = \mathbf 1 \ { \mathbf 1 ^T}\ .
 $$
The first  (resp. second) term on the left that we can call $\tilde \X$  (resp.  $\tilde \Y^T$) 
is a lower (resp. upper) triangular matrix with
 positive diagonal entries . The equality above then implies that 
(i) $ \tilde X_{ij} = \tilde Y_{ij} =1$ for all $ i>j$ and (ii) $ \tilde X_{ii} + \tilde Y_{ii} = 1$ for all
$i$. In particular, both $ \tilde X_{ii}$ and $ \tilde Y_{ii} $ are positive numbers strictly less then 1.

This means that $\tilde \X, \tilde \Y$ admits  factorizations
$$ \tilde \X= (Id - \Lambda^T)^{-1} L_X  \ ,\  \tilde \Y= (Id - \Lambda^T)^{-1} L_Y\ ,
$$
where 
$$
L_X= \sum_{i=0}^\infty \tilde X_{ii} E_{ii} + (1- \tilde X_{ii}) E_{i+1\ i}\ , 
L_Y= \sum_{i=0}^\infty \tilde Y_{ii} E_{ii} + (1- \tilde Y_{ii}) E_{i+1\ i}\ .
$$
Since all entries of bi-diagonal matrices $L_X, L_Y$ are positive, these matrices are totally nonnegative
and so are
\be \label{eq:XYFac}
\X= \Y_+^T (Id - \Lambda^T)^{-1} L_X  \X_+\ , \quad \Y= \X_+^T (Id - \Lambda^T)^{-1} L_Y  \Y_+\ .
\ee
\end{proof}

\bc
$\X$ and $\Y$ are oscillatory matrices.  
\ec
\begin{proof}
We give a proof for $\X$. The factorization \eqref{eq:XYFac} we have just obtained shows that $\X$
is the product of an invertible lower-triangular TN matrix $\Y_+^T (Id -\Lambda^T)^{-1}$
and a tri-diagonal
matrix
$J=L_X \X_+$. Note that $L_X$ has all positive values on the main diagonal and
the first
sub-diagonal. 
Entries on the first super-diagonal of $\X_+$ coincide with
corresponding
entries of $\X$ and thus are strictly positive by construction. Moreover, leading
principal minors of $\X$
are strictly positive
(see the proof
of Lemma \ref{lem:IXIY}), which implies that all diagonal entries of $\X_+$ are
strictly positive too.
Thus $J$ is a tri-diagonal matrix with all non-trivial entries strictly
positive.

Since diagonal entries of $\Y_+^T (Id -\Lambda^T)^{-1}$ are strictly positive and all
other entries
are non-negative, 
every zero entry of $\X$ implies that the corresponding entry of $J$ is zero.  
In view of that all entries on the first super- and sub-diagonals of $\X$ must be 
strictly positive, which,
by a fundamental criterion of Gantmacher and Krein (Theorem 10,  II, 
\cite{gantmacher-krein}),
ensures
that $\X$ is oscillatory. 

 \end{proof}

The interlacing properties for the zeros of polynomials $p_n, q_n$, 
as well as other properties of Sturm sequences, 
follow then from Gantmacher-Krein theorems on spectral properties of oscillatory matrices (see II, Theorem 13, in \cite{gantmacher-krein}). 
We summarize the most important properties implied by Gantmacher-Krein theory.  
\bt 
\label{Sturm}
The sequences of BOPs $\{q_n\}$ and $\{p_n\}$ are Sturm sequences.  Moreover, 
\begin{enumerate}
\item their respective zeros are positive and simple,
 \item the roots of adjacent polynomials in the sequences are interlaced,
\item the following alternative representations of the biorthogonal polynomials hold
\begin{align*}
p_n(x)&=\sqrt{\frac{D_n}{D_{n+1}}}\det (x-X[n-1]), \quad 1\leq n, \\
q_n(y)&=\sqrt{\frac{D_n}{D_{n+1}}}\det (y-Y[n-1]), \quad 1\leq n.
\end{align*}
\end{enumerate}
\et
\br The fact that the roots are positive and simple follows indeed from the 
fact that $X$ and $Y$ are oscillatory.  Theorem \eqref{thm:alphazeros}, however, indicates 
that this property is true even for a more general case  when the totally positive kernel $K(x,y)$ 
is not necessarily the Cauchy kernel.  
\er

 \section{Four-term recurrence relations and Christoffel Darboux identities}
 \label{Section6}
 We establish in this section a basic form of recurrence relations and 
an analog of classical Christoffel-Darboux identities satisfied by 
$\{q_n\}$ and $\{p_n\}$.  First, we introduce the following 
 notation for semi-infinite, finite-band matrices.  
 \bd \label{def:matrixsupp}
 Given two integers $a\leq b$ , a semi-infinite matrix $A$ is said to have the support 
 in $[a,b]$ if 
 \be j-i<a \text{ or } j-i>b \text{ imply } A_{ij}=0 \ee
 The set of all matrices with supports in  $[a,b]$ is denoted $M_{[a,b]}$.  
 \ed 
 The content of this section relies heavily on the relation \eqref{eq:XYT} 
which we recall for convenience: 
$$
\X +\Y ^T=\boldsymbol \pi \boldsymbol \eta^T =D_{\pi}\mathbf 1 \mathbf 1 ^T D_{\eta}
$$
 where $D_{\pi}$, $D_{\eta}$ respectively, are diagonal matrices of averages 
of $\p$ and $\q$.   
Since the vector $\mathbf 1$ is a null vector 
of $\Lambda -Id$ we obtain 
\bp \label{thm:XYCR}
$\X$ and $\Y$ satisfy:
\begin{enumerate}
\item $(\Lambda -Id)D_{\pi}^{-1}\X +(\Lambda -Id)D_{\pi}^{-1}\Y^T=0.$
\item
$A:=(\Lambda -Id)D_{\pi}^{-1}\X\in M_{[-1,2]}.$
\item 
$
\X D_{\eta}^{-1}(\Lambda ^T-Id)+\Y ^TD_{\eta}^{-1}(\Lambda ^T-Id)=0.$
\item
 $\wh A:=\X D_{\eta}^{-1}(\Lambda ^T-Id)\in M_{[-2,1]}.$ 
\end{enumerate}
\ep
As an immediate corollary we obtain the factorization property for $X$ and $Y$. 

\bc
Let  $A$, $\wh A$ and 
$$L:=(\Lambda -Id)D_{\pi}^{-1}, \qquad  \wh L:=D_{\eta}^{-1}(\Lambda ^T-Id),$$ 
respectively, denote matrices occurring in Proposition \ref{thm:XYCR}.  Then 
$$ L\X =A, \quad \X\wh L=\wh A, \qquad A\in M_{[-1,2]}, \, \wh A\in M_{[-2,1]}. $$
Likewise, $\Y$ admits a similar factorization:
$$
\Y L^T=B, \qquad (\wh L ^T) \Y=\wh B, 
$$
where $B=-A^T, \wh B=-\wh A ^T$.  
\ec 

Hence, 
\bc
\label{fourterm}
$\p$ and $\q$ satisfy four-term recurrence relations of the form
\begin{align*}
x\le(\frac{p_n(x)}{\pi _n}-\frac{p_{n-1}(x)}{\pi_{n-1}}\ri)=A_{n-1,n+1}p_{n+1}(x)+A_{n-1,n}p_n(x)+A_{n-1,n-1}p_{n-1}(x)+A_{n-1,n-2}p_{n-2}(x), \\
y\le(\frac{q_n(y)}{\eta_n}-\frac{q_{n-1}(y)}{\eta_{n-1}}\ri)=\wh B_{n-1,n+1}q_{n+1}(y)+\wh B_{n-1,n}q_n(y)+\wh B_{n-1,n-1}q_{n-1}(y)+\wh B_{n-1,n-2}q_{n-2}(y),
\end{align*}
for $1\leq n$ with the proviso that $p_{-1}=q_{-1}=0$.  
\ec 
\begin{proof}
We give the proof for $\p(x)$ in matrix form.  Indeed, from  
$$
x\p(x)=\X \p(x),$$ 
it follows that 
$$
xL\p(x)=L \X \p(x),
$$
hence the claim, since $L \in M_{[0,1]}$ 
and $L\X =A\in M_{[-1,2]}$.  \end{proof}
Let us observe that $\wh L$ has a unique formal inverse, represented by a lower 
triangular matrix.  Let us then define 
$$
\wh \p(x)=\wh L ^{-1} \p(x).  
$$

\bt[Christoffel-Darboux Identities for $\q$ and $\p$]\label{thm:CD1}
\begin{equation}\label{eq:CDI2}
(x+y) \sum_{j=0}^{n-1}  q_j(y) p_j (x) = \q^T(y) [\Pi,(y-\Y^T)\wh L]\wh \p(x)
\end{equation}
where $\Pi := \Pi_n$ is the diagonal matrix $diag (1,1,\dots, 1,0,\dots)$ with $n$ ones (the entries are labeled from $0$ to $n-1$).  The explicit form of the commutators is:
\begin{multline}\label{eq:comm2}
[\Pi ,  (y-\Y^T)\wh L] =\wh A_{n-1,n}E_{n-1,n}-(\frac{y}{\eta_n}+\wh A_{n,n-1})E_{n,n-1}
-\\
\wh A_{n,n-2}E_{n,n-2}-\wh A_{n+1,n-1}E_{n+1,n-1}, 
\end{multline}

where $A_{i,j}$, $\wh A_{i,j}$ respectively, denote the $(i,j)$th entries  of $A$, $\wh A$, occurring in Proposition \ref{thm:XYCR}.  

\et
\begin{proof}
We give the proof of equation \eqref{eq:CDI2}.  Since $(y-\Y)\q=0$ it suffices
to prove that the left hand side equals $\q ^T\Pi(y-Y^T)\wh L\wh \p(x)$. 
From the definition of $\wh \p$ and equation \eqref{def:XY} we obtain
$$
(x+y)\q^T(y)\Pi \p(x)= \q^T(y)\Pi y \wh L \wh \p(x)+
\q^T(y)\Pi \X\p(x)=\q^T(y)\Pi y \wh L \wh \p(x)+
\q^T(y)\Pi \X\wh L \wh \p(x), $$
which, after switching $\X\wh L$ with $-\Y^T \wh L$ in view of Proposition \ref{thm:XYCR}, gives equation \eqref{eq:CDI2}.  To get the commutator equation \eqref{eq:comm2} 
one needs to perform an elementary computation using the definition of $\wh A$.  
\end{proof}

We establish now basic properties of $\wh \p$ and its biorthogonal partner 
$\wh \q$ defined below.

 \bp
 \label{hattedBOPs}
 The sequences of polynomials
 \be
 \wh \p = \wh L^{-1} \p \ , \ \ \ \wh \q ^T= \q  ^T \wh L
 \ee
 are characterized by the following properties 
 \begin{enumerate}
 \item $\deg \wh q_n = n+1$, $\deg \wh p_n = n$;
 \item $\ds \int \wh q_n \d \beta = 0$;
\item $\ds \iint \wh p_n(x) \wh q_m(y) \frac {\d\alpha \d\beta}{x+y} = \delta_{mn}$ ;
\item $\wh q _n(y) =\frac{1}{\eta_{n+1}}\sqrt{\frac{D_{n+1}}{D_{n+2}}}y^{n+1}+
\mathcal O (y^n);$

\end{enumerate}
 In addition
 \begin{description}
 \item a. $\wh \q$ and $\wh \p$ satisfy the intertwining relations with $\q$ and $\p$ 
 \bea
&& y \wh\q^T=  -   \q^T\wh A , \cr
 && x \p =  \wh A\wh \p; \label{recrelhat}
 \eea
\item b. $\wh \q$ and $\wh \p$ admit the determinantal representations:
 \bea
\wh q_n(y) &\&= \frac{1}{\eta_{n}\eta_{n+1}\sqrt{D_nD_{n+2}}} \det \le[
 \begin{array}{cccc}
 I_{00} & \dots &&I_{0n+1}\\
 \vdots& && \vdots\\
 I_{n-1\,0}& \dots&&I_{n-1\,n+1}\\
 \beta_0& \dots & &\beta_{n+1}\\
 1 & \dots &&y^{n+1} 
 \end{array}
 \ri]\\
\wh p_n(x) &\&   =\frac {1}{D_{n+1}} \det \le[
 \begin{array}{cccc}
 I_{00} & \dots &I_{0\,n}&1\\ 
 \vdots& && \vdots\\
 I_{n-1\,0}& \dots&I_{n-1\,n}&x^{n-1}\\
 I_{n 0}& \dots & I_{n\,n}&x^n\\
 \beta_0& \dots &\beta_{n}& 0 
 \end{array}
 \ri]\label{detwhp}
 \eea
 \item c. $\ds \beta_0 \iint \wh p_n(x) y^j  \frac {\d\alpha \d\beta}{x+y}  = \beta_j \iint \wh p_n(x) \frac {\d\alpha \d\beta}{x+y}$, $ j\leq n$.

\end{description}
 \ep

\begin{proof}

Assertions (1), (2) and (4) follow directly from the shape of the matrix $\wh L$. Assertion (3) follows from $\langle \p,\q^t\rangle=\1$ by multiplying it by $\wh L$ on the right and by $\wh L^{-1}$ on the left.
Assertion (c) follows from assertions (1), (2) and (3); indeed from (2) and (3), it follows that  the polynomial $\wh p_n$ is biorthogonal to all polynomials of degree $\leq n$ with zero $\d\beta$--average and $\{\beta_0 y^j - \beta_j: 
0\leq j\leq n\}$ 
is a basis for such polynomials.

The intertwining relations follow from the definitions of the matrices $\wh L, \wh A$ and of the polynomials $\wh \p, \wh \q$.

The determinantal expression for $\wh q_n$ follows by inspection since the proposed expression has the defining properties (1) and (2)  and is biorthogonal to all powers $1,x,\dots, x^{n-1}$. The normalization is found by comparing the leading coefficients of $\wh q_n = \frac{1}{\eta_{n+1}} q_{n+1}+ \mathcal O(y^n)$.
The determinantal expression for $\wh p_n(x)$ follows again by inspection; indeed if $F(x)$ is the determinant in (\ref{detwhp}) then 
\be
\langle F(x)|y^j \rangle=  \det \le[
 \begin{array}{cccc}
 I_{00} & \dots &I_{0\,n}&I_{0j}\\ 
 \vdots& && \vdots\\
 I_{n-1\,0}& \dots&I_{n-1\,n}&I_{n-1\, j}\\
 I_{n 0}& \dots & I_{n\,n}&I_{n\,j}\\
 \beta_0& \dots &\beta_{n}& 0 
 \end{array}
 \ri] = -\beta_j D_{n+1} = \frac{\beta_j}{\beta_0} \langle F(x)|1\rangle.  
\ee
where the determinants are computed by expansion along the last row.
The proportionality constant is again found by comparison.  
\end{proof}

\par\vskip 5pt
One easily establishes a counterpart to Theorem \ref{thm:CD1} valid 
for $\wh \q$ and $\wh \p$.  

\bp[Christoffel--Darboux identities for $\wh \q$ and $\wh \p$ ]
\label{propCDI}
We have
\be \label{CDI1}
(x+y) \sum_{j=0}^{n-1} \wh q_j(y) \wh p_j (x) = \q^T(y)  [(x-X)\wh L, \Pi]\wh  \p(x)= 
\q^T(y) [\Pi,(-x -Y^T)\wh L]\wh \p(x).  
\ee
\ep

\br Observe that the commutators occurring in both theorems have 
identical structure; they only differ in the variable $y$ in Theorem 
\ref{thm:CD1} being now replaced by $-x$.  
We will denote by $\mathbb A(x) $ the commutator $[\Pi, (-x-\Y^T)\wh L]$ and by $\mathbb A_n(x)$ its nontrivial $3\times 3$ block.  Thus the nontrivial block in Proposition \ref{propCDI} reads:
\be
\mathbb A_n(x) = \le[
\begin{array}{cc|c}
0&0&\wh A _{n-1,n}\\
\hline
-\wh A _{n,n-2}& \frac{x}{\eta_{n}} -\wh A _{n, n-1}& 0\\
0&-\wh A_{n+1,n-1}&0
\end{array}\ri]
\ee 
while the block appearing in Theorem \ref{thm:CD1} is simply $\mathbb A_n(-y)$.  
\par \vskip 5pt
\er
 With this notation in place we can present  the Christoffel-Darboux 
identities in a unified way. 

\bc[Christoffel--Darboux identities for $\q, \p$, and $\wh \q,\wh \p$ ] \label{cor:CDIuni}
The biorthogonal polynomials $\q, \p$, and $\wh \q,\wh \p$ satisfy 
\bea
(x+y) \sum_{j=0}^{n-1} q_j(y) p_j (x) = \q^T(y) \mathbb A(-y)\wh \p(x),\\
(x+y) \sum_{j=0}^{n-1} \wh q_j(y) \wh p_j (x) = \q^T(y) \mathbb A(x)\wh \p(x).
 \eea
\ec
\section{ Approximation problems and perfect duality} \label{sec:AproxPerfD}
We will associate a chain of Markov functions associated with measures 
$\d \alpha$ and $\d \beta$ by taking the 
Stieltjes' transforms of the corresponding measures as well as their reflected, with respect to the origin, images.  
\bd
Define 
\begin{align}
&W_{\beta}(z)=\int \frac{1}{z-y} \d \beta(y),  &W_{\alpha^*}(z)&=\int \frac{1}{z+x}\d \alpha(x), \cr
&W_{\alpha^*\beta}(z)=-\iint \frac{1}{(z+x)(x+y)}\d \alpha(x) \d \beta(y),
&W_{\beta \alpha^*}(z)&=\iint \frac{1}{(z-y)(y+x)}\d \alpha(x) \d \beta(y).   
\end{align}
\ed
We recall now an important notion of a Nikishin system
associated with two measures (see \cite{ns}, p. 142, called there a 
MT system of order $2$).  

\bd Given two measures $\d \mu _1$ and $\d\mu_2$ with disjoint 
supports $\Delta _1 $, $\Delta _2$ respectively, a Nikishin system of order $2$ is a pair of 
functions 
$$
f_1(z)=\int _{\Delta_1} \frac{\d \mu_1(x_1)}{z-x_1}, \qquad 
f_2(z)=\int_{\Delta_1}\frac{\d \mu_1(x_1)}{z-x_1}\int _{\Delta_2} \frac{\d \mu_2(x_2)}{x_1-x_2}.  
$$
\ed
\br The definition of a Nikishin system depends on the order in 
which one "folds" measures.  If one starts from $\d \mu_2$ , rather than 
$\d \mu _1$ one 
obtains a priory a different system.  As we show below the relation between 
these two Nikishin systems is in fact of central importance to the 
theory we are developing.  
\er 
The following elementary observation provides the proper framework for our 
discussion.  
\bl Let $\d \alpha^*$ denote the measure obtained from $\d \alpha$ by 
reflecting the support of $\d \alpha$ with respect to the origin.  Then 
$W_{\beta}, W_{\beta \alpha^*}$ and $W_{\alpha^*}, W_{\alpha^*\beta}$ 
are Nikishin systems associated with measures $\d \beta $ and $\d \alpha^*$ with no predetermined ordering of measures. 
\el  
The relation between these two Nikishin systems can now be readily 
obtained. 
\bl \label{lem:Plucker}
\be \label{eq:Plucker} 
W_{\beta}(z)W_{\alpha^*}(z)=W_{\beta \alpha^*}(z)+W_{\alpha^*\beta}(z). 
\ee
\el
\begin{proof}
Elementary computation gives: 
$$
W_{\beta}(z)W_{\alpha^*}(z)=\iint \frac{1}{(z-y)(z+x)}\d \alpha(x) \d \beta(y)=
\iint \frac{1}{(x+y)}\le[\frac{1}{z-y}-\frac{1}{z+x}\ri ]\d \alpha(x) \d \beta(y), 
$$
which implies the claim. 
\end{proof}
\br Equation \eqref{eq:Plucker} was introduced in \cite{ls-cubicstring} for the DP peakons (see Lemma 4.7 there). 
Observe that this formula is valid for any Nikishin system of order $2$.   
\er 
We formulate now the main approximation problem, modeled after that of 
\cite{ls-cubicstring} 
\bd Let $n\geq 1$.  Given two Nikishin systems $W_{\beta}, W_{\beta \alpha^*}$ and $W_{\alpha^*}, W_{\alpha^*\beta}$ we seek polynomials $Q(z), degQ=n$, $P_{\beta}(z), 
deg P_{\beta}=n-1$ and $P_{\beta \alpha^*}(z), deg P_{\beta \alpha^*}=n-1$,  
which satisfy Pad$\acute{e}$-like approximation conditions as $z\rightarrow \infty, \, z\in \C_{\pm} $: 
\begin{subequations}\label{eq:PadeA}
\begin{align}
Q(z)W_{\beta}(z)-P_{\beta}(z)=\mathcal O\le(\frac{1}{z}\ri), \\
Q(z)W_{\beta\alpha^*}(z)-P_{\beta\alpha^*}(z)=\mathcal O\le(\frac{1}{z}\ri), \\
Q(z) W_{\alpha^*\beta}(z)-P_{\beta}(z)W_{\alpha^*}(z)+P_{\beta\alpha^*}(z)=\mathcal O\le(\frac{1}{z^{n+1}}\ri)
\end{align}
\end{subequations}

\ed
\br In the case that both measures have compact support we can 
remove the condition that $z\in \C _{\pm}$ since 
all the functions involved are then holomorphic around $z=\infty$.  
\er 
\br In the terminology used for example in \cite{vanassche} the triplets of 
polynomials $Q, P_{\beta},P_{\beta \alpha^*}$ provide a Hermite-Pad\'{e}
approximation of type $I$ to the Nikishin system $W_{\beta}, W_{\beta \alpha^*}$ and, simultaneously, a Hermite-Pad\'{e}
approximation of type $II$ to the Nikishin system $W_{\alpha^*}, W_{\alpha^*\beta}$. \er
 
\bd 
We call the left hand sides of approximation problems \eqref{eq:PadeA} 
$R_{\beta}, R_{\beta \alpha^*}$ and $R_{\alpha^*\beta}$ respectively, referring to them as remainders.  
\ed 

The relation of the approximation problem \eqref{eq:PadeA} to the theory of biorthogonal 
polynomials $\q$ and $\p$ is the subject of the next theorem.  
\bt \label{thm:Padeq} 
Let $q_n(y)$ be defined as in \eqref{BOPs}, and let us set 
$Q(z)=q_n(z)$ Then $Q(z)$ is the unique, up to a multiplicative 
constant, solution of the approximation problem \eqref{eq:PadeA}.  
Moreover, $P_{\beta},P_{\beta \alpha^*}$ and all the remainders 
$R_{\beta}, R_{\beta \alpha^*}$ and $R_{\alpha^*\beta}$ are uniquely determined from $Q$ with the help of the formulas: 
\begin{subequations}
\begin{align}
P_{\beta}(z)&=\int \frac{Q(z)-Q(y)}{z-y}\d \beta(y),\qquad 
P_{\beta \alpha^*}(z)=\iint\frac{Q(z)-Q(y)}{(z-y)(x+y)}\d \alpha(x) \d \beta(y) ,  \\
R_{\beta}(z)&=\int \frac{Q(y)}{z-y}\d \beta(y),  \qquad \qquad 
R_{\beta \alpha^*}(z)=\iint\frac{Q(y)}{(z-y)(x+y)}\d \alpha(x) \d \beta(y) , \\
R_{\alpha^*\beta}(z)&=-\iint\frac{Q(y)}{(z+x)(x+y)}\d \alpha(x) \d \beta(y)
=\int\frac{R_{\beta}(x)}{z-x}      \d \alpha^*(x).  
\end{align}
\end{subequations}

\et 
\begin{proof}
We start with the first approximation problem involving $Q(z)W_{\beta}(z)$.  
Writing explicitly its first term we get: 
$$
\int \frac{Q(z)}{z-y}\d\beta(y)=\int \frac{Q(z)-Q(y)}{z-y}\d\beta(y)+\int \frac{Q(y)}{z-y}\d\beta(y).$$

 Since $ \int \frac{Q(z)-Q(y)}{z-y}\d\beta(y)$ is a polynomial in $z$ of degree $n-1$, while $\int \frac{Q(y)}{z-y}\d\beta(y)=
\mathcal O (\frac{1}{z})$, we get the first and the third formulas.  The second and fourth formulas are obtained in an analogous way from the second approximation problem.  
Furthermore, to get the last formula we compute $P_{\beta}$ and $P_{\beta \alpha^*}$ from the first two approximation problems and substitute into the third 
approximation problem, using on the way Lemma \ref{lem:Plucker}, to obtain: 
$$
R_\beta W_{\alpha^*}-R_{\beta \alpha^*}=R_{\alpha^* \beta}.  
$$
Substituting explicit formulas for $R_{\beta}$ and $R_{\beta \alpha^*}$ gives 
the final formula. To see that $Q(z)$ is proportional to $q_n(z)$ we 
rewrite  $-R_{\alpha^*\beta}$ as: 
\begin{align*}
&\iint \frac{Q(y)}{(z+x)(x+y)}\d \alpha(x)\d \beta(y)=\iint\frac {Q(y)}{(x+y)}
\left[
\frac{1}{z+x}-\frac{1-(-(\frac{x}{z}))^n}{z+x}\right ]\d \alpha(x)\d \beta(y)+\\
&\iint\sum_{j=0}^{n-1}\frac{(-x)^j}{z^{j+1}}\frac{Q(y)}{(x+y)(z+x)}\d \alpha \d \beta =
\iint\frac {Q(y)}{(x+y)}\left[
\frac{(\frac{-x}{z})^n}{z+x}\right]\d \alpha(x)\d \beta(y)+
\iint\sum_{j=0}^{n-1}\frac{(-x)^j}{z^{j+1}}\frac{Q(y)}{(x+y)(z+x)}\d \alpha \d \beta
\end{align*}

To finish the argument we observe that the first term is already $\mathcal O(
\frac{1}{z^{n+1}})$, hence the second term must vanish.  This gives: 
$$
\iint \frac{x^j Q(y)}{x+y}\d\alpha(x) \d \beta(y)=0, \qquad 0\leq j\leq n-1,
$$
which characterizes uniquely (up to a multiplicative constant) 
the polynomial $q_n$.  
\end{proof}
\br In the body of the proof we used an equivalent form 
of the third approximation condition, namely
\be \label{eq:3rdPadeA}
R_\beta W_{\alpha^*}(z)-R_{\beta \alpha^*}(z)=R_{\alpha^* \beta}(z)=\mathcal O(\frac{1}{z^{n+1}}). 
\ee
\er

By symmetry, we can consider the Nikishin systems associated with measures 
$\alpha$ and $\beta^*$ with the corresponding Markov functions 
$W_{\alpha}, W_{\alpha \beta^*}$ and $W_{\beta^*}, W_{\beta^*\alpha}$. 
We then have an obvious interpretation of the polynomials $p_n$.  
\bt \label{thm:Padep}
Let $p_n(x)$ be defined as in \eqref{BOPs}, and let us set 
$Q(z)=p_n(z)$. Then $Q(z)$ is the unique, up to a multiplicative 
constant, solution of the approximation problem for $z \rightarrow 
\infty, z\in \C_{\pm}$:
\begin{subequations}\label{eq:PadeB}
\begin{align}
Q(z)W_{\alpha}(z)-P_{\alpha}(z)=\mathcal O\le(\frac{1}{z}\ri), \\
Q(z)W_{\alpha\beta^*}(z)-P_{\alpha\beta^*}(z)=\mathcal O\le(\frac{1}{z}\ri), \\
Q(z)W_{\beta^*\alpha}(z)-P_{\alpha}(z) W_{\beta^*}(z) +P_{\alpha\beta^*}(z)=\mathcal O\le(\frac{1}{z^{n+1}}\ri), 
\end{align}
\end{subequations}

where $P_{\alpha}, P_{\alpha \beta^*}$ are given by formulas of Theorem 
\ref{thm:Padeq} after switching $\alpha$ with $\beta$.  
\et 

Clearly, one does not need to go to four different types of 
Nikishin systems in order to characterize $q_n$ and $p_n$.  
The following corollary is an alternative characterization of 
biorthogonal polynomials which uses only  the first pair of Nikishin systems. 
\bc \label{cor:Padeqp}
Consider the Nikishin systems $W_{\beta}, W_{\beta \alpha^*}$ and $W_{\alpha^*}, W_{\alpha^*\beta}$.  Then the pair of biorthogonal polynomials $\{q_n, p_n\}$ solves: 
\begin{enumerate}
\item $Q(z)=q_n(z) $ solves Hermite-Pad\'{e} approximations given by equations 
\eqref{eq:PadeA}, 
\begin{subequations}
\begin{align*}
Q(z)W_{\beta}(z)-P_{\beta}(z)=\mathcal O\le(\frac{1}{z}\ri), \\
Q(z)W_{\beta\alpha^*}(z)-P_{\beta\alpha^*}(z)=\mathcal O\le(\frac{1}{z}\ri), \\
Q(z) W_{\alpha^*\beta}(z)-P_{\beta}(z)W_{\alpha^*}(z)+P_{\beta\alpha^*}(z)=\mathcal O\le(\frac{1}{z^{n+1}}\ri)
\end{align*}
\end{subequations}

\item $Q(z)=p_n(-z)$ solves switched (Type I with Type II) Hermite-Pad\'{e} approximations 
\begin{subequations}\label{eq:PadeC}
\begin{align}
Q(z)W_{\alpha^*}(z)-P_{\alpha^*}(z)=\mathcal O\le(\frac{1}{z}\ri), \\
Q(z)W_{\alpha^*\beta }(z)-P_{\alpha^*\beta}(z)=\mathcal O\le(\frac{1}{z}\ri), \\
Q(z) W_{\beta \alpha^*}(z)-P_{\alpha^*}(z)W_{\beta}(z)+P_{\alpha^*\beta}(z)=\mathcal O\le(\frac{1}{z^{n+1}}\ri)
\end{align}
\end{subequations}
\end{enumerate}
\ec 
We finish this section with a few results needed for the Riemann-Hilbert problem approach to biorthogonal polynomials $\{q_n, p_n\}$ which will be presented in the next section.  

\bd
 \label{defauxwave}
 We define the auxiliary vectors
in addition to the main polynomial vectors $\q_{_0}(w):= \q(w)$ and $\p_{_0}(z) := \p(z)$, as 
 \bea
&&  \q_{_1}(w) :=\int \q(y)\frac {\d\beta(y)}{w-y}, \qquad \q_{_2}(w):=\int \frac{\q_1(x)}{w-x}\d \alpha^*(x), \\
&&  \p_{_1} (z) :=\int \frac {\p(x)\d\alpha(x)}{z-x}, \qquad \p_{_2}(z) := \int \frac{\p_1(y)}{z-y}\d \beta^*(y).  
\eea
Moreover,
\bea
&&\wh \p_{_1}(z) := \wh L^{-1}  \le(\p_{_1}(z)+\frac{1}{\beta_0}\langle \p|1\rangle\ri)=\wh L^{-1} \p_{_1}(z) - \1,\\
&& \wh \p_2(z):=\int \frac{\wh \p_1(y)}{z-y}\d \beta^*(y). 
\eea
Here $\1$ is the vector of ones.\footnote{The formula $\beta_0^{-1} <\wh \p_n,1> = -1$ follows directly from the determinantal expression in Proposition  \ref{hattedBOPs}}. 
\ed

\br 
 Note that the definition above  unifies the approximants and their 
respective remainders (see Theorem \ref{thm:Padeq}), thus, for example, $\q_{_1}(w) =\mathbf R_{\beta}(w), 
\q_{_2}(w)=\mathbf R_{\alpha^*\beta}(w)$ etc.  The definition of ``hatted'' quantities 
is justified below.  
\er 
\bt [Extended Christoffel-Darboux Identities]\label{thm:ECD1}
Let ${a,b=0,\dots 2}$.  Then  
\be
(w+z) \q_{_a}^T(w) \Pi  \p_{_b}(z) = \q_{_a}^T(w) \mathbb A(-w) \wh \p_{_b}(z)-\mathbb F(w,z)_{ab}
\ee
where 
\be
\mathbb F(w,z)= \begin{bmatrix}
0&0&1\\
0& 1&W_{\beta^*}(z)+W_{\beta}(w) \\
1&W_{\alpha}(z)+W_{\alpha^*}(w)&W_{\alpha^*}(w)W_{\beta^*}(z)+W_{\alpha^*\beta}(w)+W_{\beta^*\alpha}(z)  
\end{bmatrix}.  
\ee 

\et 
\begin{proof}
The proof goes by repeated applications of the Christoffel-Darboux Identities 
given by Theorem \ref{thm:CD1} and Pad\'{e} approximation conditions 
\ref{eq:PadeA}.  The details have been relegated to Appendix \ref{sec:app-CD}.  
\end{proof} 

We point out that if we set $w=-z$ in the CDI's contained in Theorem 
\ref{thm:ECD1}, the left hand side vanishes identically and the RHS contains terms of the form $\q_{a}(-z) \mathbb A(z) \wh\p_{_b}(z) $ minus $\mathbf F_{ab}(-z,z)$. The main observation is that the second term is {\bf constant}, independent of both $z$ and $n$, and hence one ends up with the 
{\bf perfect pairing} (see \cite{Bertosemiclass})
between the auxiliary vectors. 
For the reader's convenience we recall the definition of $\mathbb A(z)$ to 
emphasize the implicit dependence on the index $n$ hidden in the projection 
$\Pi$.

\bt (Perfect Duality) 

Let 
$$
\mathbb J=\begin{bmatrix}0&0&1\\0&1&0\\1&0&0 \end{bmatrix}.
$$
Then  
$$
\q_{_a}^T(-z) \mathbb A(z) \wh \p_{_b}(z)=\mathbb J_{ab}, \qquad 
\text{ where }
\mathbb A(z)=[(z-\X)\wh L,\Pi]. 
$$
\et
\begin{proof} 
The only nontrivial entry to check is $(2,2)$.  In this case, after one 
substitutes $w=-z$ into 
$W_{\alpha^*}(w)W_{\beta^*}(z)+W_{\alpha^*\beta}(w)+W_{\beta^*\alpha}(z)$, one 
obtains the identity of Lemma \ref{lem:Plucker}. 
\end{proof}
There also exists an analog of the extended Christoffel-Darboux identities 
 of Theorem \ref{thm:ECD1} for the ``hatted'' quantities.  

We first define: 
\bd 
For $a=0,1,2$,
\be
\wh \q_{_a}^T  := \q_{_a}^T \wh L. 
\ee
\ed
The following identities follow directly from the respective 
definitions.  
\bl \label{lem:wqzp} 
\begin{align*}
&w \wh \q_a^T(w)=\begin{cases} \q_a^T(w)\Y^T \wh L, \quad &a=0,1\\
                    \q_2^T(w)\Y^T \wh L -\langle 1|\wh\q_0^T\rangle, \quad &a=2. \end{cases}\\
&(z-\X)\wh L \wh \p_b(z)=\begin{cases}0, \quad &b=0,\\
                             \frac{\langle \p_0|z+y\rangle}{\beta_0}, \quad &b=1,\\
-\langle \p_0|1\rangle +\frac{\langle \p_0|z+y\rangle W_{\beta^*}(z)}{\beta_0}, \quad &b=2.  
\end{cases}
\end{align*}
\el

\bt [Extended Christoffel-Darboux Identities for $\wh \q_a, \wh \p_b$]\label{thm:ECD2}
Let $a,b=0,\dots 2$. Then  
\be
(w+z) \wh \q_{a}^T(w) \Pi \wh \p_{b}(z) = 
\q_{a}^T(w)\mathbb A(z)\wh \p_{b}(z)-\wh {\mathbb F}(w,z)_{ab}
\ee
where 
\be
\wh {\mathbb F}(w,z)={\mathbb F}(w,z) -\frac{w+z}{\beta_0}\begin{bmatrix}
0&1&W_{\beta^*}(z)\\
0& W_{\beta}(z)&W_{\beta}(w)W_{\beta^*}(z)\\
1&W_{\alpha^*\beta^*}(w)&W_{\alpha^*\beta^*}(w)W_{\beta^*}(z)
\end{bmatrix}.  
\ee 

\et 
\begin{proof}
We give an outline of the proof.  
For $a=0,1$, in view of Lemma \ref{lem:wqzp}
$$
(w+z) \wh \q_{a}^T(w) \Pi \wh \p_{b}(z)=\q_{a}^T(w)\mathbb A(z)\wh \p_{b}(z)
+\q_{a}^T(w)\Pi (z-\X)\wh L \wh \p_{b}(z).  
$$
The second term equals, again by Lemma \ref{lem:wqzp}, 
$$
\q_{a}^T(w)\Pi\begin{cases}0, \quad &b=0,\\
                             \frac{\langle \p_0|z+y\rangle}{\beta_0}, \quad &b=1,\\
-\langle \p_0|1\rangle+\frac{\langle \p_0|z+y\rangle W_{\beta^*}(z)}{\beta_0}, \quad &b=2.  
\end{cases}
$$  
Now, one goes case by case, using biorthogonality of $\q_0^T$ and $\p_0$, 
and the definition of $\q_1^T(w)$.  After a few elementary steps one arrives 
at the claimed result.  
The computation for $a=2$ is only slightly more involved.  From Lemma 
\ref{lem:wqzp} we obtain: 
$$
(w+z) \wh \q_{2}^T(w) \Pi \wh \p_{b}(z) = 
\q_{2}^T(w)\mathbb A(z)\wh \p_{b}(z)-\langle 1|\wh \q_0\rangle \Pi \wh \p_b(z)+
\q_2^T(w)\Pi (z-\X)\wh L \wh \p_b(z).  
$$
In view of biorthogonality of $\wh \q_0^T$ and $\wh \p$, after some 
intermediate computations, one obtains: 
\begin{align*}
\langle 1|\wh \q_0\rangle \Pi \wh \p_b(z)=\begin{cases} 1, \quad &b=0\\
                                           W_{\alpha}(z)+\frac{\langle 1|1\rangle}{\beta_0}, 
\quad &b=1,\\
W_{\beta^*\alpha}(z)+\frac{\langle 1|1\rangle }{\beta_0}W_{\beta^*}(z), \quad &b=2.  
\end{cases} 
\end{align*}                                           
Likewise, 
\begin{align*}
\q_2^T(w)\Pi (z-\X)\wh L \wh \p_b(z)=\begin{cases} 0, \quad &b=0\\
                                     \frac{w+z}{\beta_0}W_{\alpha^*\beta}(w)    - W_{\alpha^*}(w)+\frac{\langle 1|1\rangle }{\beta_0}, 
\quad &b=1,\\\frac{w+z}{\beta_0}W_{\beta^*}(z)W_{\alpha^*}(w)-W_{\alpha^*\beta}(w)-W_{\beta^*}(z)W_{\alpha^*}(w)+
\frac{\langle 1|1\rangle }{\beta_0}W_{\beta^*}(z), \quad &b=2,   
\end{cases} 
\end{align*}  
and the claim follows.  

\end{proof}

\section{Riemann--Hilbert problems}
In this section we set up two 
Riemann--Hilbert problems characterizing the  Cauchy BOPs that enter the Christoffel--Darboux identities of the previous section. This is done in anticipation of possible applications to the study of universality for the corresponding two--matrix model.  Moreover, since the Christoffel--Darboux kernels contain also the  hatted polynomials, it is useful to 
formulate the Riemann--Hilbert problems for those polynomials as well.  

We will also make the {\bf assumption} (confined to this section) that the measures $\d\alpha,\d\beta$ are {\it absolutely continuous with respect to Lebesgue's measure} on the respective axes.  Thus one can write $
\frac{\d \alpha}{\d x} = {\rm e}^{-\frac{U(x)}\hbar}\ ,
\frac{\d \beta}{\d y} = {\rm e}^{-\frac{V(y)}\hbar},  
$
for the respective (positive!) densities on the respective supports: the signs in the exponents are conventional so as to have (in the case of an unbounded support) the {\it potentials} $U,V$ bounded from below. The constant $\hbar$ is only for convenience when studying the asymptotics of biorthogonal polynomials 
for large degrees (small $\hbar$).   

Since the Christoffel--Darboux identities involve the expressions $\q_{_a}\mathbb A \wh \p_{_b}$, we are naturally led to characterize the sequences $\q$ and $\wh \p$.  However, the other sequences can be characterized in a similar manner by swapping the r\^oles of the relevant measures and symbols.

\subsection{Riemann--Hilbert problem for the $\q$--BOPs}
We will be describing here only the RHP characterizing the polynomials $q_n(y)$, where the characterization of the polynomials $p_n(x)$ is obtained by simply interchanging $\alpha$ with $\beta$ (see for example Theorem \ref{thm:Padep}).  

We consider the real axis $\R$ oriented as usual and define 
\bea
\vec \q^{(n)}_0(w):= \le[
\begin{array}{ccc}
q_{n-2}(w) \  
q_{n-1}(w)\  
q_{n}(w)
\end{array}\ri]^t, \quad 
\vec\q_{_1}^{(n)} (w):= \int \vec \q^{(n)}(y) \frac{\d\beta(y)}{w-y},
\quad \vec \q_{_2}^{(n)}(w) :=\int \vec \q_1^{(n)}(x)\frac{\d\alpha^*(x)}{w-x} 
\eea
For simplicity of notation we will suppress the superscript $^{(n)}$ in most of the following discussions, only to restore it when necessary for clarity; the main point is that an arrow on top of the corresponding vector will denote a ``window'' of three consecutive entries of either the ordinary vector $\q$ (index $a=0$), or the auxiliary vectors $\q_{_a}$ (index $a=1, 2$, see Def. \ref{defauxwave}) which, as we might recall at this point, 
combine the polynomials and the corresponding remainders in the 
Hermite-Pad\'{e} approximation problem given by Theorem \ref{thm:Padeq}.   
Some simple observations are in order.  
The vector $\vec\q_{_1}(w)$ is an analytic vector which  has a jump--discontinuity on the support of $\d\beta$ contained in the positive real axis. As $w\to \infty$ (away from the support of $\d\beta$) it decays as $\frac 1 w$. Its jump-discontinuity is (using Plemelj formula)
\be
\vec\q_{_1}(w)_+ = \vec\q_{_1}(w)_- - 2\pi i \frac{\d\beta}{\d w} \vec\q_{_0}(w)\ ,\ \ w\in supp(\d \beta).  
\ee
Looking at the leading term at $w=\infty$ we see that 
\be
\vec \q_{_1}(w)  =\frac{1}{w} \begin{bmatrix}
\eta_{n-2}& 
\eta_{n-1}&
\eta_{n}
\end{bmatrix}^t+ \mathcal O(1/w^{2})\ .
\ee
The vector  $\vec \q_{_2}(w)$ is also analytic with a jump discontinuity on the {\bf reflected support} of $\d\alpha$ (i.e. on $supp (\d \alpha^*)$). In view of Theorem \ref{thm:Padeq}, recalling that $\q_2$ are remainders 
of the Hermite-Pad\`{e} approximation problem of type II, 
we easily see that 
\bea
\vec\q_{_2}(w)  = \begin{bmatrix}
\ds \frac {c_{n-2}} {(-w)^{n-1}}  &
\ds \frac {c_{n-1}} {(-w)^{n}} &
\ds \frac {c_n} {(-w)^{n+1}} 
\end{bmatrix}^t
(1+\mathcal O(1/w)), \qquad 
c_n  := \langle x^n|q_n\rangle = \sqrt{\frac {D_{n+1}}{D_n}} >0.  
\label{q2asym}
\eea
The jump-discontinuity of $\vec \q_{_2}$ is 
\be
\vec\q_{_2}(w)_+ = \vec\q_{_2}(w)_-  -  2\pi i \frac{\d \alpha^*}{\d w}\vec \q_{_1}(w)\ \ \ \ w \in supp (d\alpha^*).  
\ee
The behavior of $\vec\q_{_0}(w)$ at infinity is 
\be
\vec\q_{_0}(w)  = 
\begin{bmatrix}
\ds \frac{w^{n-2}} {c_{n-2}}  
&
\ds\frac {w^{n-1}}{c_{n-1}}  &
\ds \frac {w^n}{c_n} 
\end{bmatrix}^t(1+\mathcal O(1/w)),
\ee
with the same $c_n$'s as in \ref{q2asym}. 

Define the matrix
\be
\Gamma(w) :=\overbrace{\begin{bmatrix}
1&-c_n \eta_n&0\\
0&1&0\\
0&(-1)^{n-1}\frac{\eta_{n-2}}{c_{n-2}}&1 \end{bmatrix}
\le[\begin{array}{ccc}
0 &0&c_n
\cr 
0 &\frac{1}{\eta_{n-1}}&0
\cr
\frac {(-1)^{n}} {c_{n-2}}&0&0
\end{array}\ri] }^{=: \mathcal N_q} [\vec \q_{_0}^{(n)}(w), \vec \q_{_1}^{(n)}(w), \vec \q_{_2}^{(n)}(w)]
\label{normalizedqRHP}
\ee

\bp
\label{RHP1}
The matrix $\Gamma(w)$ is analytic on $\C \setminus (supp(\d\beta)\cup 
supp(\d \alpha^*)$.  Moreover, it  satisfies the jump conditions
\be \label{eq:RHq}
\begin{split}
\Gamma(w)_+ & = \Gamma(w)_-\le[\begin{array}{ccc}
1 &  -2\pi i \frac{\d \beta}{\d w} & 0 \cr
0&1&0\cr
0&0&1
\end{array}\ri]\ , \qquad w\in supp(\d\beta)\subset \R_+\cr
\Gamma(w)_+ & = \Gamma(w)_- \le[
\begin{array}{ccc}
1&0&0\\
0&1&  -2\pi i \frac{\d \alpha^*}{\d w}\\
0&0&1
\end{array}
\ri]\ ,\qquad w\in supp(\d \alpha^*)\subset \R_-  
\end{split}
\ee
and its asymptotic behavior at $w=\infty$ is
\bea \label{eq:Gamma-as}
\Gamma(w)  = (\1 + \mathcal O(w^{-1}))\le[
\begin{array}{ccc}
w^n& 0 & \\
0& w^{-1}& 0 \\
0&0&w^{-n+1}
\end{array}
\ri]
\eea
Moreover, $\Gamma(w)$ can be written as: 
\begin{align}\label{eq:q-recovery}
\Gamma(w)=\begin{bmatrix}c_n\eta_n&0&0\\0&\frac{1}{\eta_{n-1}}&0\\0&0&\frac{(-1)^{n-1}\eta_{n-2}}{c_{n-2}}\end{bmatrix}
\begin{bmatrix}\wh q_{n-1}& \wh q_{1,n-1}&\wh q_{2,n-1}\\
q_{n-1}&q_{1,n-1}&q_{2,n-1}\\\wh q_{n-2}& \wh q_{1,n-2}&\wh q_{2,n-2}\end{bmatrix}. \end{align}
 
\ep
\begin{proof}
All the properties listed are obtained from elementary matrix computations.  
 
\end{proof}
\br 
An analogous problem with the r\^oles of $\alpha,\beta$, etc., interchanged, characterizes the monic orthogonal polynomials $p_{n-1}(x)$  of degree $n-1$ in $x$. 
\er 
\bc \label{cor:RHq}
Given  $n\in \mathbb{N}$, the absolutely continuous measures $\d\beta\subset \mathbb{R}_+$ and $\d \alpha^* \subset \mathbb{R}_-$, and assuming the existence of all the bimoments 
$I_{ij}$ there exists a unique 
matrix $\Gamma(w)$ solving the RHP specified by equations 
\eqref{eq:RHq}, \eqref{eq:Gamma-as}.  The solution characterizes 
uniquely the polynomials $q_{n-1}$ as well as $\wh q_{n-1}$.  In particular, 
 the normalization constants $c_{n-1},\eta_{n-1}$ (i.e. the ``norm'' of the monic orthogonal polynomials and the $\beta$ average of the $q_{n-1}$) are read off the following expansions
\bea
\Gamma_{2,1}(w)  = \frac 1{c_{n-1}\eta_{n-1}} w^{n-1} + \mathcal O(w^{n-2}),\qquad 
\Gamma_{2,3}(w) = (-1)^{n} \frac {c_{n-1}}{\eta_{n-1}w^n} + \mathcal O(w^{-n-1})
\eea
or, equivalently,  
\be
\frac{1}{\eta_{n-1}^2} =(-1)^n \lim_{w\to \infty} w\Gamma_{2,1}(w)\Gamma_{2,3}(w),\qquad 
c_{n-1}^2=(-1)^n\lim_{w\to \infty} w^{2n-1}\frac{\Gamma_{2,3}(w)}{\Gamma_{2,1}(w)}.  
\ee 
\ec
\begin{proof}
Given $\d\beta$ and $\d\alpha^*$ it suffices 
to construct the Nikishin systems $W_{\beta}, W_{\beta \alpha^*}$ and 
$W_{\alpha^*}, W_{\alpha^*\beta}$ followed by solving the Hermite-Pad\'{e} 
approximation problems given by equations \eqref{eq:PadeA}. The existence of the solution is ensured by the existence of all bimoments $I_{ij}$ (see equation 
\eqref{eq:bimoments} for the definition).   Then one constructs 
the polynomials $\wh q_{j}$, finally the matrix $\Gamma(w)$ using 
equation \eqref{eq:q-recovery}.  By construction $\Gamma(w)$ satisfies the 
Riemann-Hilbert factorization problem specified by equations \eqref{eq:RHq} and 
\eqref{eq:Gamma-as}.  
Since the determinant of $\Gamma(w)$ is constant in $w$ (and equal to one), the solution of the Riemann--Hilbert problem is unique.  The formulas for 
$\eta_{n-1}$ and $c_{n-1}$ follow by elementary matrix computations.  
\end{proof}

%

%

\br
By multiplication on the right with a diagonal matrix 
$ \YY (w) := \Gamma(w) {\rm diag}\bigg(\exp \le({-\frac {2V+U^\star}{3\hbar}}\ri),\\ \exp\le({\frac {V-U^\star}{3\hbar}}\ri),\exp \le({\frac{2U^\star+V}{3\hbar}}\ri)\bigg)$
one can reduce the RHP to an equivalent one with constant jumps. It then follows that $\YY(w)$ solves a linear ODE with the same singularities as $V', {U^\star}'$; for example if $U', V'$ are rational functions then so is the coefficient matrix of the ODE and the orders of poles do not exceed those of $V', U'$.  In this case it can be shown \cite{Bertola:MomentTau} that the principal minors of the matrix of bimoments are  isomonodromic tau--functions in the sense of Jimbo--Miwa--Ueno \cite{JMU}.
\er 
\subsection{ Riemann--Hilbert problem for the $\wh \p$--BOPs}
Referring to the defining properties of $\wh p_n(x)$ as indicated in Prop. \ref{hattedBOPs} we are going to define a second  $3\times 3$ local RHP that characterizes them.

Define
\be
\vec{\wh \p_{0}}(z) := \le[\begin{array}{ccc}
\wh p_{n-2}(z) &
\wh p_{n-1}(z) &
\wh p_{n}(z)
\end{array}\ri]^t
\ee
and $\vec {\wh \p}_{1,2}(z)$ as the same {\bf windows} 
of the auxiliary vectors $\wh \p _{1,2}$  introduced in Definition \ref{defauxwave}. 
We first study the large $z$ asymptotic behavior of 
$\wh p_{0,n}(z), \wh p _{1,n}(z), \wh p_{2,n}(z)$.  
\bl 
The asymptotic behavior at $z \to \infty, z\in \C_{\pm}$ is given by: 
\begin{align}\label{eq:phat-as}
&\wh p_{0,n}(z)=-\frac{\eta_n}{c_n}z^n(1+\mathcal{O}(1/z)),\\
&\wh p_{1,n}(z)=-1+ \mathcal{O}(1/z), \\ 
&\wh p_{2,n}(z)=(-1)^n \frac{c_{n+1} \eta_{n+1}}{z^{n+2}}(1+
\mathcal{O}(1/z)).  
\end{align}
\el 
\begin{proof}
We give a proof for $\wh p_{1,n}(z)=\int \frac{\wh p_{0,n}(x)}{z-x}\d\alpha(x)+
\frac{1}{\beta_0}\langle \wh p_{0,n}|1\rangle $.  The first term is $\mathcal{O}(\frac{1}{z})$, 
while the second term can be computed using biorthogonality and the fact that 
$\wh p_{0,n}=-(\eta_n p_{0,n}+\eta_{n-1}p_{0,n-1}+\cdots+ \eta_0p_{0,0})$.  
Thus the second term equals $-\frac{\eta_0}{\beta_0}\langle p_{0,0}|1\rangle =-1$, since 
$\eta_0=q_0 \beta_0$, hence the claim for $\wh p_{1,n}(z)$ follows.  
The remaining statements are proved in a similar manner.  
\end{proof}

For reasons of normalization, and in full analogy with equation 
\eqref{normalizedqRHP}, we 
arrange the window of all $\wh \p$s wave vectors into the matrix 
\be
\wh \Gamma(z) =\overbrace{
 \le[
\begin{array}{ccc}
0&0& -\frac {c_n}{\eta_n} \\
0&-1&0\\
\frac {(-1)^{n}}{c_{n-1}\eta_{n-1}} &0&0
\end{array}
\ri]\le[
\begin{array}{crc}
1 & -1 & 0\\
0 &1&0\\
0&-1 &1
\end{array}
\ri] }^{=: \mathcal N_{\wh p}} \le[\vec{\wh \p}(z),\vec{\wh \p}_1(z),\vec{\wh \p}_2(z)\ri].
\label{normalizedphatRHP}
\ee 
\bp
\label{RHP2}
The matrix $\wh \Gamma(z)$ is analytic in 
$\C \setminus supp(\d\alpha)\cup supp(\d\beta^*)$.  Moreover, it 
satisfies the jump conditions 
\be \label{eq:RHphat}
\begin{split}
\wh \Gamma(z) _+ &= \wh \Gamma(z)_- \le[
\begin{array}{ccc}
1 &   -2\pi i \frac{\d \alpha}{\d z}  & 0\cr
0&1&0\\
0&0&1
\end{array}
\ri]\ ,\ \ z\in supp(\d\alpha)\subseteq \R_+\\
\wh \Gamma(z) _+ &= \wh \Gamma(z)_- \le[
\begin{array}{ccc}
1 &  0 & 0\cr
0&1& -2\pi i \frac{\d \beta^*}{\d z}\\
0&0&1
\end{array}
\ri]\ ,\ \ z\in supp(\d \beta^*)\subseteq \R_-, 
\end{split}
\ee 
and its asymptotic behavior at $z=\infty$ is
\bea \label{eq:Gammahat-as}
\wh \Gamma(z) =  \le(\1  + \mathcal O\le(\frac 1 z\ri)\ri)\le[
\begin{array}{ccc}
z^n & 0& \cr
0 &1 &0\cr
0 &0& \frac 1{ z^{n}} 
\end{array} 
\ri].  
\eea
$\wh \Gamma(z)$ can be written as: 
\begin{align}\label{eq:p-recovery}
\Gamma(z)=\begin{bmatrix}c_n&0&0\\0&-1&0\\0&0&\frac{(-1)^n}{c_{n-1}}\end{bmatrix}
\begin{bmatrix}p_{0,n}&p_{1,n}&p_{2,n}\\
\wh p_{0,n-1}&\wh p_{1,n-1}&\wh p_{2,n-1}\\p_{0,n-1}& p_{1,n-1}&p_{2,n-1}\end{bmatrix}. \end{align}

\ep

The existence and uniqueness of the solution of the Riemann-Hilbert problem 
\eqref{eq:RHphat}, \eqref{eq:Gammahat-as} is proved 
in a similar way to the proof of Corollary \ref{cor:RHq}.  
\bc
Given  $n\in \mathbb{N}$, the absolutely continuous measures $\d\alpha\subset \mathbb{R}_+$ and $\d \beta^* \subset \mathbb{R}_-$, and assuming the existence of all the bimoments 
$I_{ij}$ there exists a unique 
matrix $\Gamma(z)$ solving the RHP specified by equations 
\eqref{eq:RHphat}, \eqref{eq:Gammahat-as}.  The solution characterizes 
uniquely the polynomials $\wh p_{n-1}$ and $p_{n}$.  
\ec
\section{ Acknowledgments}

M.B. would like to thank the Department of Mathematics of the 
University of Notre Dame for hospitality during which the project was initiated and J. Harnad for insight on the relationship of Cauchy biorthogonal polynomials with matrix models. 

While working on this project, M. B. and M. G. enjoyed  the hospitality of
the Department of Mathematics, University of Saskatchewan and M. G.  and J. S.
 enjoyed  the  hospitality of the Centre de recherches math\'ematiques,
Universit\'e de Montr\'eal.  

J.S would also like to thank 
H. Lundmark for an ongoing collaboration on the cubic string problem which 
motivated many of the questions  addressed in this paper.    

\appendix
\section{Appendix: Proof of Extended Christoffel-Darboux Identities}
\label{sec:app-CD}
\bt [Extended Christoffel-Darboux Identities \ref{thm:ECD1}]
Let ${a,b=0,\dots 2}$.  Then  
\be
(w+z) \q_{_a}^T(w) \Pi  \p_{_b}(z) = \q_{_a}^T(w) \mathbb A(-w) \wh \p_{_b}(z)-\mathbb F(w,z)_{ab}
\ee
where 
\be
\mathbb F(w,z)= \begin{bmatrix}
0&0&1\\
0& 1&W_{\beta^*}(z)+W_{\beta}(w) \\
1&W_{\alpha}(z)+W_{\alpha^*}(w)&W_{\alpha^*}(w)W_{\beta^*}(z)+W_{\alpha^*\beta}(w)+W_{\beta^*\alpha}(z)  
\end{bmatrix}.  
\ee 

\et 
\begin{proof}
The proof goes by repeated applications of the Christoffel-Darboux Identities 
given by Theorem \ref{thm:CD1} and Pad\'{e} approximation conditions 
\eqref{eq:PadeA}.  We observe that all quantities with labels $a=1,2$ 
have asymptotic expansions around $\infty $ in the open 
half-planes $\C_{\pm}$ (they are holomorphic expansions in the case of compactly supported measures $\d\alpha, \d\beta$).  We 
will subsequently call the part of the expansion corresponding to 
negative powers of $z$ or $w$,  of a function $f(z,w)$ the { \sl regular }
part of $f$ and denote it $\le( f(z,w)\ri)_{-,z} $, 
$\le( f(z,w)\ri)_{-,w} $ respectively.   In all cases the {\sl regular} 
parts are obtained by subtracting certain polynomial expressions 
from functions holomorphic in $\C_{\pm}$ and as such 
the {\sl regular} parts are holomorphic in these half-planes  with vanishing 
limits at $\infty$ approach from within the respective half-planes.  

We will only indicate the main steps in computations for each entry, 
denoted below by $(a,b)$.  
 
\noindent (0,1):  With the help of the first 
approximation condition, we have
$$
 \q_1^T(w) \Pi  \p_{0}(z)=\le (\int \frac{\q_0^T(w)\Pi \p _0(z)}{w-y}\d \beta(y)\ri)_{-,w}. 
$$
Using the Christoffel-Darboux Identities and the notation of Corollary 
\ref{cor:CDIuni} we get 
\begin{align*}
& \q_1^T(w) \Pi  \p_{0}(z)=\le(\int \frac{\q_0^T(w)\mathbb{A}(-w)
\wh \p _0(z)}{(w+z)(w-y)}\d \beta(y)\ri)_{-,w}=\\
&\int \frac{\q_0^T(y)\mathbb{A}(-w)
\wh \p _0(z)}{(w+z)(w-y)}\d \beta(y)+\le(\int \frac{(\q_0^T(w)-\q_0^T(y))\mathbb{A}(-w)
\wh \p _0(z)}{(w+z)(w-y)}\d \beta(y)\ri)_{-, w},
\end{align*}
where we dropped the projection sign in the first term because $\mathbb{A}(-w)$ 
is a polynomial of degree one. Using now the partial fraction decomposition 
$$
\frac{1}{(w+z)(w-y)}= \frac{1}{z+y}\le(\frac{1}{w-y}-\frac{1}{w+z}\ri),
$$ 
we get that 
$$
\le(\int \frac{(\q_0^T(w)-\q_0^T(y))\mathbb{A}(-w)
\wh \p _0(z)}{(w+z)(w-y)}\d \beta(y)\ri)_{-,w}=-\le(\int \frac{(\q_0^T(-z)-\q_0^T(y))
[\Pi, (-z-\Y ^T)\wh L]
\wh \p _0(z)}{(w+z)(z+y)}\d \beta(y)\ri)_{-,w}. 
$$
 
Observe that $(-z-\Y ^T)\wh L
\wh \p _0(z)=0$, $\q_0^T(-z)(-z-\Y ^T)\wh L=0$ and $ \q_0^T(y)(-z-\Y^T)\wh L=
-(y+z)\q_0^T(y)\wh L$ so
\begin{align*}
&\le (\int \frac{(\q_0^T(w)-\q_0^T(y))\mathbb{A}(-w)
\wh \p _0(z)}{(w+z)(w-y)}\d \beta(y)\ri )_{-,w}=\le(\int \frac{(\q_0^T(y))(z+\Y^T)\wh L\Pi
\wh \p _0(z)}{(w+z)(z+y)}\d \beta(y)\ri)_{-,w} =\\
&\int \frac{\q_0^T(y)\wh L\Pi
\wh \p _0(z)}{w+z}\d \beta(y)=0, 
\end{align*}
because the $\beta$ averages of $\wh \q$ are zero.  Thus
$$
(w+z) \q_1^T(w) \Pi  \p_{0}(z)=\q_1^T(w)\mathbb A(-w)\wh \p_0(z). 
$$ 

\noindent (2,0):  Using the second Pad\`{e} approximation condition and biorthogonality we easily obtain 
$$
\mathbf R^T_{\beta \alpha^*}(w)\Pi \p_0(z)=\frac{\mathbf R^T_{\beta \alpha^*}(w)\mathbb A(-w)\wh \p_0(z)+1}{w+z},   
$$

Now, substituting this formula into the formula for the third approximation condition, written as in equation \eqref{eq:3rdPadeA}, gives: 
$$
\mathbf R^T_{\alpha^*\beta}(w)\Pi \p_0(z)=
\frac{\mathbf R^T_{\alpha^*\beta}(w)\mathbb A(-w)\wh \p_0(z)-1}{w+z}. 
$$
Restoring the collective notation of $\q _a, \p_a$ we obtain :
$$
(w+z)\q_2^T(w)\Pi \p_0(z)=\q_2^T(w)\mathbb A(-w)\wh \p_0(z)-1.  
$$
\noindent (0,1):  To compute $\q_0^T(w)\Pi \p_1(z)$ we use the Pad\`{e} approximation conditions 
\eqref{eq:PadeB}, in particular the first condition gives us: 
$$
\q_0^T(w)\Pi\p_0(z) W_{\alpha}(z)-\q_0^T(w)\Pi\mathbf{P}_{\alpha}(z)=
\q_0^T(w)\Pi\mathbf{R}_{\alpha}(z).  
$$

We observe that this time we have to project on the negative powers of 
$z$.   Thus the goal is to compute  $\le (\q_0^T(w)\Pi\p_0(z) W_{\alpha}(z)\ri)_{-,z}$.  
We have 
\begin{align*}
\le(\int \frac{\q_0^T(w)\Pi\p_0(z) \d \alpha(x)}{z-x}\ri)_{-,z}=
\le(\int \frac{\q_0^T(w)\mathbb A(-w)\wh\p_0(z) \d \alpha(x)}{(z-x)(w+z)}\ri)_{-,z}
=\\
\le(\int \frac{\q_0^T(w)\mathbb A(-w)\wh \p_0(x) \d \alpha(x)}{(z-x)(w+z)}\ri)_{-,z}
+\le(\int \frac{\q_0^T(w)\mathbb A(-w)(\wh\p_0(z)-\wh\p_0(x)) \d \alpha(x)}
{(z-x)(w+z)}\ri)_{-,z}.  
\end{align*}
We see that the first term is already {\sl regular}  in $z$.  
To treat the second term we perform the partial fraction expansion
$\frac{1}{(z-x)(w+z)}=\frac{1}{w+x}[{\frac{1}{z-x}-\frac{1}{w+z}}]$ and 
observe that the term with $\frac{1}{z-x}$ does not contribute, while the 
second term
\begin{align*}
&-\le(\int \frac{\q_0^T(w)\mathbb A(-w)(\wh\p_0(z)-\wh\p_0(x)) \d \alpha(x)}
{(w+x)(w+z)}\ri)_{-,z}=
-\le(\int \frac{\q_0^T(w)\mathbb A(-w)(\wh\p_0(-w)-\wh\p_0(x)) \d \alpha(x)}
{(w+x)(w+z)}\ri)_{-,z}=\\
&\int \frac{\q_0^T(w)\mathbb A(-w)\wh\p_0(x) \d \alpha(x)}
{(w+x)(w+z)}.  
\end{align*}

Thus 
$$
\q_0^T(w)\Pi\p_1(z)=\frac{\q_0^T(w)\mathbb A(-w)\wh L ^{-1} \p_1(z)}{w+z}-\frac{\q_0^T(w)\mathbb A(-w)\wh L ^{-1} \p_1(-w)}{w+z}.  
$$

In other words, 
$$
(w+z)\q_0^T(w)\Pi\p_1(z)=\q_0^T(w)\mathbb A(-w)\wh L ^{-1}(\p_1(z)-\p_1(-w)).  
$$
 More explicitly, the second term above 
can be rewritten as 
$$
-\q_0^T(w)\mathbb A(-w)\wh L^{-1} \p_1(-w)=\q_0^T(w)\Pi\int \p(x)\d\alpha(x).  
$$
On the other hand 
\begin{align*}
&\q_0^T(w)\mathbb A(-w)\iint \frac{\wh \p(x)\d \alpha(x)\d \beta(y)}
{\beta_0 (x+y)}=
\q_0^T(w)\Pi \iint \frac{(w+x)\p(x)\d \alpha(x)\d \beta(y)}
{\beta_0 (x+y)}=\\
&\q_0^T(w)\Pi\int \p(x)\d\alpha(x)+\q_0^T(w)\Pi \iint \frac{(w-y)\p(x)\d \alpha(x)\d \beta(y)}
{\beta_0 (x+y)}. 
\end{align*} 
Now the second term 
$\q_0^T(w)\Pi \iint \frac{(w-y)\p(x)\d \alpha(x)\d \beta(y)}{\beta_0 (x+y)}=0$ because $\q_0^T(w)\Pi \langle \p(x)|\bullet \rangle$ is a projector 
on polynomials of degree $\leq n-1$ and thus $w\q_0^T(w)\Pi\langle\p(x)|1\rangle-
\q_0^T(w)\Pi\langle\p(x)|y\rangle=w-w=0$, hence 
$$
(w+z)\q_0^T(w)\Pi\p_1(z)=\q_0^T(w)\mathbb A(-w)\wh \p_1(z),  
$$
where $\wh \p_1(z)=\wh L^{-1}  (\p_{_1}(z)+\frac{1}{\beta_0}\langle\p|1\rangle)$ as 
advertised earlier.  

\noindent $(1,1)$:  We use again the Pad\`{e} approximation conditions 
\eqref{eq:PadeB}, this time multiplying on the left by $\q_1^T(w)\Pi$
and projecting on the  negative powers 
of $z$ , to obtain: 
$$
\le(\q_1^T(w)\Pi\p_0(z) W_{\alpha}(z)\ri)_{-,z}=
\q_1^T(w)\Pi\p _1(z).  
$$
With the help of the result for the $(0,1)$ entry, after carrying out the 
projection, we obtain
$$
(w+z)\q_1^T(w)\Pi\p _1(z)=\q_1^T(w)\mathbb A(-w)\wh \p_1(z)+
\q_1^T(w)\mathbb A(-w)\le(\int \frac{\wh \p(x) \d \alpha(x)}{w+x}-
\frac{1}{\beta_0}\langle\wh \p|1\rangle\ri). 
$$ 
We claim that 
$$
\q_1^T(w)\mathbb A(-w)\le(\int \frac{\wh \p(x) \d \alpha(x)}{w+x}-
\frac{1}{\beta_0}\langle\wh \p|1\rangle\ri)=-1. 
$$
Indeed, the left hand side of the equation equals:
\begin{align*}
&\frac{1}{\beta_0}\q_1^T(w)\Pi\iint \frac{(y-w)\p(x)\d\alpha(x) \d \beta(y)}{x+y}=\frac{1}{\beta_0}\int \frac{\q _0^T(\xi)}{w-\xi}\Pi (\langle \p |y\rangle-w\langle \p|1\rangle)\d \beta(\xi)=\\
&\frac{1}{\beta_0}\int \frac{\xi-w}{w-\xi}\d \beta(\xi)=-1.  
\end{align*} 
 Thus  
$$
(w+z)\q_1^T(w)\Pi\p _1(z)=\q_1^T(w)\mathbb A(-w)\wh \p_1(z)-1. 
$$

\noindent $(2,1)$:  This time we use projections 
in both variables, one at a time, and compare the results. 
First, let us use the projections in $z$. Thus 

$$
  \q_2^T(w)\Pi \p_1(z)=\le(\q_2^T(w)\Pi \p_0(z) W_{\alpha}(z)\ri)_{-,z}. 
$$
Carrying out all the projections we obtain an expression of the form: 
$$
\q_2^T(w)\Pi \p_1(z)=\frac{\q_2^T(w)\mathbb A(-w)\wh \p_1(z)}{w+z}-
\frac{W_{\alpha}(z)+F(w)}{w+z}. 
$$
Observe that, since $\q_2^T(w)$ is $\mathcal {O}(1/w)$ and the first term on the right is much smaller, $F(w)=\mathcal{O}(1)$.  More precisely, 
by comparing the terms at $1/w$ on both sides, we conclude that in fact, 
$F(w)=\mathcal{O}(1/w)$.  
Now, we turn to the projection in $w$, resulting in an expression 
of the form:
$$
\q_2^T(w)\Pi \p_1(z)=\frac{\q_2^T(w)\mathbb A(-w)\wh \p_1(z)}{w+z}-
\frac{W_{\alpha^*}(w)+G(z)}{w+z}.$$ 
This, and the fact that $F(w)=\mathcal{O}(1/w)$, implies that 
$F(w)=W_{\alpha^*}(w), G(z)=W_{\alpha}(z)$.  Hence 
$$
(w+z)\q_2^T(w)\Pi \p_1(z)=\q_2^T(w)\mathbb A(-w)\wh \p_1(z)-(W_{\alpha}(z)+W_{\alpha^*}(w)). 
$$

\noindent $(0,2)$:  We use the projection in the $z$ variable and the fact that by the Pad\`{e} approximation condition 
\eqref{eq:3rdPadeA}, after exchanging $\alpha$ with $\beta$,
$\p_2(z)=\p_1(z)W_{\beta^*}(z)-\mathbf R _{\alpha\beta^*}(z)$.  
Using the result for the $(0,1)$ entry we obtain:
$$
\q_0^T(w)\Pi\p_2(z)=\frac{\q_0^T(w)\mathbb A(-w)\p_1(z)W_{\beta^*}(z)}{w+z} -\le(\frac{\q_0^T(w)\mathbb A(-w)\p_0(z)W_{\alpha\beta^*}(z)}{w+z}\ri)_{-,z}. 
$$
Carrying out the projection and reassembling terms according to the definition 
of $\wh \p_2(z)$ we obtain: 
$$
\q_0^T(w)\Pi\p_2(z)=\frac{\q_0^T(w)\mathbb A(-w)\wh \p_2(z)}{w+z}
-\frac{\q_0^T(w)\Pi \langle \p_0|1\rangle}{w+z}=\frac{\q_0^T(w)\mathbb A(-w)\wh \p_2(z)}{w+z}-\frac{1}{w+z}.  
$$

\noindent $(1,2)$:  We use the projection in the $z$ variable
and the Pad\`{e} approximation condition $\p_2(z)=\p_1(z)W_{\beta^*}(z)-\mathbf R _{\alpha\beta^*}(z)$.  Consequently, 
\begin{align*}
&\q_1^T(w)\Pi \p_2(z)=\q_1^T(w)\Pi \p_1(z)W_{\beta^*}(z)-\q_1^T(w)\Pi\mathbf R_{\alpha \beta^*}(z)=\\
&\le(\frac {\q_1^T(w)\mathbb A(-w)\wh \p_1(z)-1}{w+z}\ri)W_{\beta^*(z)}-\le(\q_1^T(w)\Pi \p_0(z)W_{\alpha \beta^*}(z)\ri)_{-,z}.  
\end{align*}
Using the existing identities and carrying out the projection in the second 
term we obtain: 
$$
(w+z)\q_1^T(w)\Pi \p_2(z)=\q_1^T(w)\mathbb A(-w)\wh \p_2(z)-W_{\beta^*}(z)-W_{\beta}(w). 
$$

\noindent $(2,2)$:  The computation is similar to the one for $(1,2)$ entry; we use both 
projections.  
The projection in the $z$ variable gives: 
$$
\q_2^T(w)\Pi \p_2(z)=\frac{\q_2^T(w)\mathbb A(-w)\wh \p_2(z)}{w+z}
+\frac{F(w)-(W_{\alpha^*}(w)+W_{\alpha}(z))W_{\beta^*}(z)+W_{\alpha \beta^*}(z)}{w+z}.  
$$
On the other hand, carrying out the projection in the $w$ variable we obtain: 
$$
\q_2^T(w)\Pi \p_2(z)=\frac{\q_2^T(w)\mathbb A(-w)\wh \p_2(z)}{w+z}
+\frac{G(z)-(W_{\beta}(w)+W_{\beta^*}(z))W_{\alpha^*}(w)+W_{\beta\alpha^*}(w)}{w+z}.  
$$
Upon comparing the two expressions and using Lemma \ref{lem:Plucker} we 
obtain $F(w)=-W_{\alpha^* \beta}(w)$, hence 
\begin{align*}
&(w+z)\q_2^T(w)\Pi \p_2(z)=\q_2^T(w)\mathbb A(-w)\wh \p_2(z)
-W_{\alpha^* \beta}(w)-(W_{\alpha^*}(w)+W_{\alpha}(z))W_{\beta^*}(z)+W_{\alpha \beta^*}(z)=\\
&\q_2^T(w)[\Pi, \mathbb A(-w)]\wh \p_2(z)
-(W_{\alpha^*}(w)W_{\beta^*}(z)+W_{\alpha^*\beta}(w)+W_{\beta^*\alpha}(z)),  
\end{align*}
where in the last step we used again Lemma \ref{lem:Plucker}. 
  \end{proof}

\bibliographystyle{plain}
\bibliography{BOP}
\end{document}